\documentclass[11pt,a4paper]{article}

\usepackage[T1]{fontenc}
\usepackage[utf8]{inputenc}
\usepackage[english]{babel}
\usepackage{csquotes}
\usepackage{lmodern}
\usepackage{microtype}

\usepackage{authblk}

\usepackage{amsmath}
\usepackage{amssymb}
\usepackage{amsfonts}
\usepackage{mathtools}
\usepackage{bm}
\usepackage{braket}
\usepackage{amsthm}

\usepackage{graphicx}
\usepackage{booktabs}
\usepackage{xcolor}
\usepackage[hidelinks]{hyperref}
\usepackage[
backend=biber,
style=numeric,
sorting=none,
maxbibnames=99
]{biblatex}

\newtheorem{proposition}{Proposition}
\newtheorem{remark}{Remark}

\newcommand{\ii}{\mathrm{i}}
\newcommand{\ee}{\mathrm{e}}

\newcommand{\Hilb}{\mathcal H}
\newcommand{\Aalg}{\mathcal A}

\newcommand{\Gset}{\mathcal G}

\newcommand{\Id}{\mathbb I}
\newcommand{\Tr}{\operatorname{Tr}}
\newcommand{\Span}{\operatorname{span}}

\newcommand{\Nop}{\hat N}

\newcommand{\Diss}{\mathcal D}

\newcommand{\R}{\mathbb R}

\title{
Environment-aware transverse transport geometry\\
for non-Gaussian oscillator states
}

\author[]{Arnaud Coatanhay and Angélique Drémeau}

\affil[]{%
Lab-STICC, UMR CNRS 6285, ENSTA, Institut Polytechnique de Paris,\\
2 rue Fran\c{c}ois Verny, 29806 Brest Cedex 9, France
}

\affil[]{%
\href{mailto:arnaud.coatanhay@ensta.fr}{arnaud.coatanhay@ensta.fr}
\quad
\href{mailto:angelique.dremeau@ensta.fr}{angelique.dremeau@ensta.fr}
}

\date{\today}

\begin{document}

\maketitle

\begin{abstract}
Non-Gaussian oscillator states are highly sensitive to weak perturbations, but a reversible phase rotation and an irreversible environmental change should not receive the same diagnostic weight. We introduce a finite-dimensional, direction-dependent local transport cost on a truncated Fock space and quotient it by Hamiltonian tangent directions. The resulting transverse square root is a seminorm, not a global distance between states. A weighted operator frame $(\Gset,w)$ specifies how fluxes are represented; it must be distinguished from a unique Gorini--Kossakowski--Sudarshan--Lindblad (GKSL) representation. The finite-frame covariance $V_j\mapsto c_jV_j$, $w_j\mapsto |c_j|^2w_j$ leaves the cost invariant. We also separate algebraic accessibility from representation efficiency: already $\Gset_1=\{a_M,a_M^\dagger\}$ spans the full trace-zero sector through the divergence map, whereas adding multiphoton or diagonal directions can make selected physical tangents dramatically less expensive and more directly adapted. Numerical diagnostics for compass-like states show that the Hamiltonian phase-rotation tangent is removed, dephasing remains costly until diagonal directions are added, and two- and three-photon Lindblad tangents become inexpensive in a multiphoton frame. Thermal weights are justified at the operator-modular level but, because the mobility is arithmetic, do not define a complete KMS geometry. Finally, an observable-side calculation is presented only as a cotangent diagnostic on the diagonal Fock graph; the full transverse dual is finite only for observables commuting with the reference state. The framework is therefore an environment-aware local diagnostic, with explicit cutoff, regularization, image-residual and normalization conventions.
\end{abstract}

\section{Introduction}
\label{sec:introduction}

Non-Gaussian states of a harmonic oscillator are central resources in continuous-variable quantum optics, quantum information and quantum sensing. They include Fock states, Schrödinger-cat states, compass states and states produced by nonlinear or multiphoton processes. Their usefulness comes from features that are absent from Gaussian descriptions: interference fringes, parity structure, oscillatory Fock coherences and Wigner negativity \cite{weedbrook_pirandola_garcia_patron_cerf_ralph_shapiro_lloyd_2012,walschaers_2021}. These features make such states sensitive probes, but they also make the meaning of a state variation less obvious.

The difficulty is already visible in a simple example. A phase rotation of a cat or compass state can strongly change its Wigner function. The lobes and interference fringes move in phase space, so a usual norm may see a sizable change. Yet the motion is reversible. By contrast, dephasing may wash out the same fringes and irreversibly suppress coherences. Loss or gain changes the Fock population structure. For an open oscillator system, these variations should not be assigned the same physical status.

This is the main question of the paper. Given a state $\rho$, an infinitesimal variation $X=\dot\rho$, and a specified environment, can we remove the reversible Hamiltonian component and quantify how efficiently the remainder is represented by an environment-adapted operator frame? A one-photon frame, a multiphoton frame and a dephasing-enriched frame should not assign the same local cost.

The Wigner representation is not the best object on which to define this geometry directly. It is an excellent diagnostic: it displays interference, negativity and phase-space structure. However, the Wigner function is a quasi-probability distribution and may take negative values \cite{hudson_1974,kenfack_zyczkowski_2004,albarelli_genoni_paris_ferraro_2018}. We therefore keep Wigner functions for visualization, but build the geometry on the density matrix and on the finite operator algebra of a truncated oscillator.

We work in the finite Fock space
\[
\Hilb_M=\Span\{\ket{0},\ldots,\ket{M}\},
\qquad
\Aalg_M=\mathcal B(\Hilb_M),
\]
and choose a finite family of physical directions
\[
\Gset=\{V_j\}_j.
\]
Typical examples are one- and multiphoton lowering operators
\[
L_k=\Pi_Ma^k\Pi_M,
\]
their adjoints, and diagonal directions associated with dephasing. The set \(\Gset\) is part of the physical model. It says which elementary changes are available to the environment or to an engineered control mechanism.

From \(\Gset\) we define a noncommutative divergence, a continuity equation and a regularized quadratic flux cost. The raw local cost
\[
\mathfrak g_{\rho,\varepsilon}^{\Gset,w}(X,X)
\]
measures how difficult it is to represent the tangent vector \(X\) by fluxes along the directions in \(\Gset\). This is analogous in spirit to the dynamical formulation of optimal transport \cite{benamou_brenier_2000,villani_2003}, and to noncommutative transport metrics associated with quantum Markov semigroups \cite{carlen_maas_2017,wirth_2018}. The emphasis here is different: we do not start from a given semigroup and construct a metric for its entropy gradient flow. We start from selected physical directions and ask how efficiently they represent selected tangents.

The key step is to remove reversible Hamiltonian motion. The unitary tangent space at \(\rho\) is
\[
T_\rho^{\mathrm{unit}}
=
\{-\ii[H,\rho]:H=H^\dagger\}.
\]
We define the transverse cost by
\[
\mathfrak g_{\rho,\varepsilon}^{\perp,\Gset,w}(X,X)
=
\inf_{H=H^\dagger}
\mathfrak g_{\rho,\varepsilon}^{\Gset,w}
\left(
X+\ii[H,\rho],
X+\ii[H,\rho]
\right).
\]
Thus the geometry first subtracts the best reversible motion and then measures the remaining dissipative component. In particular, a phase rotation has zero transverse cost, while dephasing, loss and gain generally remain visible.

This construction should be read as a toolbox rather than as a universal metric. An experimentalist does not have to assume that all state-space directions are equally meaningful. The direction family \(\Gset\) can be chosen from the mechanisms that are actually present, controlled or suspected in a given optical setup. A passive lossy device may suggest \(\Gset_1=\{L_1,L_1^\dagger\}\). A nonlinear or engineered reservoir may require \(L_2\), \(L_3\) or their adjoints. If phase diffusion is relevant, diagonal or number-operator directions can be added. Once \(\rho\), \(X\) and \(\Gset\) have been specified, the transverse cost answers a concrete diagnostic question: after subtracting the best reversible Hamiltonian motion, can the observed variation be produced efficiently by the experimentally available directions?

For example, suppose that a cat-like or compass-like state is prepared and that tomography, or a reduced reconstruction, gives a small variation after propagation through an optical device. A standard norm may indicate that the state has changed, but it does not directly say whether the change is a harmless phase rotation, ordinary attenuation, multiphoton loss, or dephasing. The proposed diagnostic compares the same variation against several candidate direction families. A near-zero transverse cost indicates an essentially reversible motion. A low but nonzero cost indicates compatibility with the chosen dissipative model. A very large cost suggests that an important physical direction is missing. In this sense the geometry can be used as a model-selection diagnostic for the experimental environment.

This is useful in three concrete ways. First, it prevents a unitary distortion of a non-Gaussian pattern from being confused with environmental degradation. Second, it tests whether a proposed perturbation is compatible with a specified set of physical channels, for example whether two- or three-photon losses are available. Third, through a restricted cotangent diagnostic on the diagonal Fock graph, it indicates how the chosen transition structure weights selected diagonal observables. This connects the present work with our previous measurement-aware viewpoint, where useful distinguishability depended on the measurement chain \cite{our_measurement_aware_2026}. Here the context is instead the environment.

The resulting object is not intended as a universal quantum Wasserstein distance. Quantum Wasserstein distances and noncommutative transport metrics address broader questions, including entropy gradient flows, channel-based transport and Lipschitz observable dualities \cite{carlen_maas_2017,wirth_2018,depalma_marvian_trevisan_lloyd_2021,depalma_trevisan_2021}. The present construction is narrower: it is a local, regularized and environment-aware diagnostic for infinitesimal variations of non-Gaussian oscillator states.

The contribution of the paper is therefore practical rather than axiomatic. We define the local cost, quotient it by unitary directions, and test it numerically on representative oscillator states. The diagnostics show that phase rotations are removed, dephasing requires diagonal directions, multiphoton loss tangents become inexpensive when the corresponding transitions are included explicitly, thermal or modular weights polarize gain and loss, and the same direction structure modifies the sensitivity of Fock observables.

The framework is deliberately finite-dimensional. The cutoff \(M\), the regularization \(\varepsilon\), the direction family \(\Gset\), the weights \(w\) and the reference state are part of the modeling. This model-dependence is not a weakness of the construction: it is what allows the diagnostic to reflect the available experimental toolbox. The goal is not to solve an infinite-dimensional optimal transport problem for oscillator states. It is to obtain a computable diagnostic of dissipative representation cost adapted to a specified physical environment.

The paper is organized as follows. Section~\ref{sec:physical-problem} formulates the physical distinction between reversible and dissipative variations. Section~\ref{sec:fock-model} introduces the finite Fock-space setting and the multiphoton directions. Section~\ref{sec:local-cost} defines the local noncommutative transport cost. Section~\ref{sec:transverse-quotient} introduces the transverse quotient by unitary orbits. Section~\ref{sec:numerics} presents the numerical diagnostics. Section~\ref{sec:relations} discusses the relation with existing quantum Wasserstein viewpoints. Section~\ref{sec:conclusion} concludes.

\section{Physical problem: dissipative variations of non-Gaussian oscillator states}
\label{sec:physical-problem}

We consider a single harmonic oscillator prepared in a non-Gaussian state and subject to an environment that may induce loss, gain, dephasing or multiphoton transitions. The state is described by a density matrix \(\rho\). We are interested in infinitesimal variations
\[
X=\dot\rho
\]
around \(\rho\).

The central point is that not all tangent vectors have the same physical meaning. A variation may be reversible and generated by a Hamiltonian, or it may correspond to an irreversible environmental process. A geometry intended to diagnose dissipation should not treat these two cases in the same way.

\subsection{Reversible phase rotations}

The simplest reversible variation is a phase rotation generated by the number operator:
\[
X_{\mathrm{rot}}
=
-\ii[\Nop,\rho].
\]
This tangent vector changes the phase of the oscillator state but preserves the spectrum of \(\rho\). It lies on the unitary orbit of the state:
\[
\rho
\longmapsto
\ee^{-\ii t\Nop}\rho\,\ee^{\ii t\Nop}.
\]
Such a variation may modify the Wigner function by rotating its interference pattern in phase space, but it does not by itself represent dissipation or loss of information to an environment.

For this reason, a purely state-space norm of \(X_{\mathrm{rot}}\) is not the relevant quantity if the goal is to quantify dissipative degradation. The variation can be large as a tangent vector and still be physically reversible.

\subsection{Dephasing and loss of coherence}

By contrast, phase diffusion or dephasing is represented here by the fixed convention
\[
X_{\mathrm{deph}}
=
-[\Nop,[\Nop,\rho]]
=
2\Diss[\Nop](\rho),
\qquad
\Diss[V](\rho)=V\rho V^\dagger-\frac12\{V^\dagger V,\rho\}.
\]
The factor of two is retained throughout the paper and in all numerical figures. In the Fock basis, the corresponding semigroup suppresses off-diagonal coherences according to
\[
\rho_{mn}(t)=\ee^{-t(m-n)^2}\rho_{mn}(0)
\]
for the normalization used above. This is not a unitary motion: it changes the coherence structure and generally degrades non-Gaussian interference features.

For cat or compass-like states, the distinction is crucial. A phase rotation merely reorients the interference pattern, whereas dephasing washes it out. Both effects are visible in phase space, but only the second is irreversible.

\subsection{Multiphoton loss and gain}

More general environmental mechanisms may involve one-photon or multiphoton transitions. For a lowering operator \(L_k\), the associated Lindblad dissipator is
\[
\Diss[L_k](\rho)
=
L_k\rho L_k^\dagger
-
\frac12
\left\{
L_k^\dagger L_k,
\rho
\right\}.
\]
The corresponding infinitesimal variation is
\[
X_{\mathrm{loss},k}
=
\Diss[L_k](\rho).
\]
Similarly, the gain direction associated with \(L_k^\dagger\) is
\[
X_{\mathrm{gain},k}
=
\Diss[L_k^\dagger](\rho).
\]

These variations are not simply geometric displacements of the state. They encode physically distinct environmental processes. One-photon loss, two-photon loss and three-photon loss may affect the same state in very different ways, especially when the state has parity structure or multiphoton coherence.

This motivates a direction-dependent geometry. The cost assigned to a variation should depend on whether the chosen environmental frame represents the corresponding transition directly and economically.

\subsection{Toolbox interpretation}

The diagnostic proposed below answers a local representation question:
\[
\begin{aligned}
(\rho,X,\Gset,w)
&\longmapsto
\text{minimum weighted flux cost}\\
&\hspace{2.8cm}\text{of the non-Hamiltonian part of }X.
\end{aligned}
\]
Two logically different notions must be separated. \emph{Algebraic accessibility} asks whether $X$ belongs to the image of the divergence. \emph{Representation efficiency} asks how large the minimizing quadratic cost is in the chosen weighted frame. In the finite truncation used here, $\Gset_1=\{a_M,a_M^\dagger\}$ already has the full trace-zero sector as its divergence image; adding $a_M^2$, $a_M^3$ or diagonal operators does not create algebraic accessibility for the examples below. It changes the conditioning and physical adaptation of the representation, sometimes by many orders of magnitude.

The set $\Gset$ and the positive weights $w$ are chosen by the experimentalist or modeler. They may encode known channels, engineered controls, or competing hypotheses about the dominant perturbation. Zero transverse cost means that the tangent is Hamiltonian within the quotient. A small nonzero cost means that its non-Hamiltonian component is represented economically by the selected frame. A large cost indicates an inefficient or poorly adapted frame, not necessarily an algebraically impossible tangent.

The construction is therefore not a distance between arbitrary density matrices. It is a state-dependent local cost and, after quotienting, a transverse seminorm for comparing reversible reorientation with dissipative coherence or population changes.

\section{Finite Fock-space model and multiphoton directions}
\label{sec:fock-model}

We work in a finite Fock-space truncation. Let
\[
\Hilb_M
=
\Span\{\ket{0},\ket{1},\ldots,\ket{M}\},
\qquad
d=M+1,
\]
and let
\[
\Aalg_M=\mathcal B(\Hilb_M)
\]
be the finite-dimensional algebra of operators acting on \(\Hilb_M\). Density matrices are positive elements of \(\Aalg_M\) with unit trace.

The projection onto \(\Hilb_M\) is denoted by \(\Pi_M\). The truncated annihilation, creation and number operators are
\[
a_M=\Pi_M a\Pi_M,
\qquad
a_M^\dagger=\Pi_M a^\dagger\Pi_M,
\qquad
\Nop_M=\Pi_M \Nop \Pi_M.
\]
When no confusion is possible, we omit the subscript \(M\).

The finite cutoff has two roles. First, it makes all objects finite matrices, so that the transport problem becomes a finite-dimensional constrained quadratic optimization problem. Second, it reflects the fact that numerical and experimental descriptions of non-Gaussian oscillator states are always effectively energy-limited.

\subsection{Multiphoton transition directions}

For \(k\geq 1\), we define the truncated \(k\)-photon lowering operator
\[
L_k
=
\Pi_M a^k\Pi_M,
\]
and its adjoint
\[
L_k^\dagger
=
\Pi_M (a^\dagger)^k\Pi_M.
\]
The action of \(L_k\) connects Fock levels separated by \(k\):
\[
L_k\ket{n}
=
\sqrt{n(n-1)\cdots(n-k+1)}
\ket{n-k},
\]
whenever \(n\geq k\), and gives zero otherwise.

These directions represent elementary multiphoton transitions. The case \(k=1\) corresponds to ordinary one-photon loss or gain. Higher values of \(k\) model nonlinear or engineered processes such as two-photon or three-photon transitions.

A basic family of admissible directions is therefore
\[
\Gset_K
=
\{L_k,L_k^\dagger:1\leq k\leq K\}.
\]
For example,
\[
\Gset_1
=
\{L_1,L_1^\dagger\}
\]
contains only one-photon transitions, while
\[
\Gset_3
=
\{L_1,L_1^\dagger,L_2,L_2^\dagger,L_3,L_3^\dagger\}
\]
also allows two-photon and three-photon transitions.

\subsection{Algebraic accessibility versus representation efficiency}

The finite truncation makes an important distinction explicit.

\begin{proposition}[Image of the one-photon frame]
For $M\geq1$ and
\[
\Gset_1=\{a_M,a_M^\dagger\},
\]
the divergence image is the full trace-zero matrix sector:
\[
\operatorname{Im}(\operatorname{div}_{\Gset_1})
=
\{X\in\Aalg_M:\Tr X=0\}.
\]
Consequently every Hermitian traceless tangent vector is algebraically representable already with $\Gset_1$.
\end{proposition}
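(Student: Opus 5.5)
The plan is to use the structure of the divergence as a sum of commutators and a duality argument in the Hilbert--Schmidt inner product $\langle A,B\rangle=\Tr(A^\dagger B)$ on $\Aalg_M$. I will take the divergence defined in Section~\ref{sec:local-cost} to be, up to sign and weight conventions, of the form $\operatorname{div}_{\Gset}(U)=-\sum_j [V_j^\dagger,U_j]$, the negative adjoint of the gradient $\partial_j A=[V_j,A]$. Positive weights $w_j$ only rescale components and do not change the image, so I ignore them.

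\emph{Inclusion $\subseteq$.} Every commutator is traceless, so each $\operatorname{div}_{\Gset_1}(U)$ is a finite sum of commutators and has zero trace. This is immediate.

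\emph{Inclusion $\supseteq$ by duality.} In finite dimensions, $\operatorname{Im}(\operatorname{div})=\ker(\operatorname{div}^\ast)^{\perp}$. Since $\operatorname{div}^\ast=-\nabla$, we have $\ker(\operatorname{div}^\ast)=\{Y:[a_M,Y]=0,\ [a_M^\dagger,Y]=0\}$. So it suffices to show that this joint commutant consists only of multiples of $\Id$. Its orthogonal complement is then exactly $\{X:\Tr X=0\}$.

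\emph{Irreducibility, the main step.} I expect this to be the main obstacle, though it is elementary. The operator $a_M$ is a weighted shift $a_M\ket{n}=\sqrt n\ket{n-1}$ whose weights are nonzero for $1\le n\le M$. Hence $(a_M)^{M}(a_M^\dagger)^{M}$ is a nonzero multiple of $\ket{0}\bra{0}$. Applying powers of $a_M^\dagger$ on the left and of $a_M$ on the right then produces every matrix unit $\ket{m}\bra{n}$ up to nonzero scalars. Therefore the unital algebra generated by $a_M$ and $a_M^\dagger$ is all of $\Aalg_M$. Any $Y$ commuting with both generators commutes with all of $\Aalg_M$, so $Y=c\Id$. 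Alternatively, one can argue directly: $[a_M,Y]=0=[a_M^\dagger,Y]$ implies $[\Nop_M,Y]$-type relations. More simply, $Y$ commutes with $a_M^\dagger a_M=\Nop_M$, which has a simple spectrum, so $Y$ is diagonal. Commuting with $a_M$ then forces equal consecutive diagonal entries. This step needs $M\ge1$ so that the shift weights exist. The assumption $M\ge1$ is exactly what makes $\Nop_M$ non-degenerate and the chain of equalities connect all levels.

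\emph{Hermitian consequence.} The trace-zero sector is all of the image, so any Hermitian traceless $X$ equals $\operatorname{div}_{\Gset_1}(U)$ for some $U$. If the paper's flux space requires a compatibility constraint such as $U_{a^\dagger}=U_a^\dagger$-type symmetry, I would note the following. The map is real-linear and commutes with the adjoint involution under that pairing. Hence $\tfrac12(U+U^\ast)$ also represents $X$, and the Hermitian statement follows.
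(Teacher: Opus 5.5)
Your argument is correct and is essentially the paper's proof: commutators are traceless, the Hilbert--Schmidt orthogonal complement of the image is the joint commutant of $a_M$ and $a_M^\dagger$, and that commutant is scalar because commuting with $\Nop_M=a_M^\dagger a_M$ forces diagonality while commuting with $a_M$ makes consecutive diagonal entries equal (your matrix-unit irreducibility argument is an equally valid alternative for this last step). Two small remarks: the paper's fluxes $J_j\in\Aalg_M$ are unconstrained, so the Hermitian consequence is immediate and your symmetrization hedge is not needed; and $\Nop_M$ is non-degenerate for every $M\geq 0$, so the hypothesis $M\geq1$ plays no role there (the case $M=0$ holds trivially).
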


\begin{proof}
Every divergence term is a commutator and therefore has zero trace. Conversely, the Hilbert--Schmidt orthogonal complement of the image consists of matrices commuting with both $a_M$ and $a_M^\dagger$. Their common commutant is scalar: commuting with $a_M^\dagger a_M=\Nop_M$ makes a matrix diagonal in the nondegenerate Fock basis, and commuting with $a_M$ forces all diagonal entries to coincide. Hence the image has codimension one and equals the trace-zero sector.
\end{proof}

This proposition does not make all directions equally economical. The mobility and weighted frame determine the minimum flux norm. Higher-order or diagonal operators can therefore make a tangent \emph{inexpensive} or \emph{directly adapted} even though it was already algebraically representable with $\Gset_1$.

\subsection{Dephasing and diagonal directions}

Multiphoton loss and gain are not sufficient to represent all relevant dissipative effects. In particular, dephasing acts mainly on coherences in the Fock basis. A natural dephasing generator is the number operator \(\Nop_M\), leading to the Lindblad-type variation
\[
\Diss[\Nop_M](\rho)
=
\Nop_M\rho\Nop_M
-
\frac12
\{\Nop_M^2,\rho\}.
\]
Equivalently,
\[
2\Diss[\Nop_M](\rho)
=
-[\Nop_M,[\Nop_M,\rho]]
=
X_{\mathrm{deph}}.
\]
This equality fixes the normalization used in the numerical section.

It may also be useful to add diagonal directions. A simple choice is a family of projectors or finite differences in the Fock basis, for instance
\[
P_n=\ket{n}\bra{n},
\qquad
0\leq n\leq M,
\]
or suitable centered combinations of such projectors. These directions improve the representation of population changes and dephasing-like variations.

We shall use the notation
\[
\Gset_K^{(N)}
=
\Gset_K\cup\{\Nop_M\},
\]
and
\[
\Gset_K^{\mathrm{diag}}
=
\Gset_K\cup\{\text{diagonal directions}\},
\]
when diagonal or dephasing directions are included.

\subsection{Physical meaning of the weighted operator frame}

The pair $(\Gset,w)$ is part of the model. It is a weighted operator frame for the continuity equation, not a unique GKSL representation of an environment. Different jump-operator parameterizations can describe the same GKSL generator, and the transport cost must not depend on a purely scalar reparameterization.

For nonzero complex numbers $c_j$, define
\[
V_j'=c_jV_j,
\qquad
w_j'=|c_j|^2w_j.
\]
With the flux change $J_j'=J_j/\overline{c_j}$, both the divergence and the quadratic flux cost are unchanged. Therefore
\[
\mathfrak g_{\rho,\varepsilon}^{\Gset',w'}(X,X)
=
\mathfrak g_{\rho,\varepsilon}^{\Gset,w}(X,X).
\]
This covariance is consistent with GKSL rescaling: the term $\gamma_j\Diss[V_j]$ is unchanged under $V_j\mapsto c_jV_j$ and $\gamma_j\mapsto\gamma_j/|c_j|^2$; if $w_j\propto1/\gamma_j$, then $w_j\mapsto|c_j|^2w_j$.

Beyond such frame covariance, changing the operator content genuinely changes representation efficiency. Ordinary attenuation suggests a one-photon frame. Engineered nonlinear reservoirs suggest adding higher-order transitions. Phase noise suggests adding number-operator or diagonal directions. Since $\Gset_1$ already spans the trace-zero sector algebraically, the purpose of these additions is not to turn an impossible tangent into a possible one, but to provide a more economical and physically interpretable representation.

\subsection{Test states}

The numerical part of the paper will use a small set of representative truncated oscillator states. These include coherent states, Fock states and non-Gaussian superpositions such as cat or compass-like states.

Coherent states provide a quasi-classical reference. Fock states provide states with sharply defined occupation number and strong non-Gaussianity. Cat and compass-like states provide phase-space interference structures and are particularly sensitive to dephasing and parity-related effects.

The purpose is not to classify all non-Gaussian states, but to test whether the transverse transport cost reacts differently to reversible rotations, dephasing, loss, gain and multiphoton transitions.

\section{Local noncommutative transport cost}
\label{sec:local-cost}

We now define the local cost associated with a given family of physical directions. The construction is finite-dimensional and local around a state \(\rho\). It is designed to answer the following question: how costly is it to represent a tangent vector \(X\) using fluxes along the admissible directions in \(\Gset\)?

\subsection{Tangent vectors}

The state space in the finite truncation is
\[
\mathcal D_M
=
\left\{
\rho\in\Aalg_M:
\rho=\rho^\dagger,\ 
\rho\geq 0,\ 
\Tr(\rho)=1
\right\}.
\]
At an interior point, tangent vectors are Hermitian traceless matrices:
\[
T_\rho\mathcal D_M
=
\left\{
X\in\Aalg_M:
X=X^\dagger,\ 
\Tr(X)=0
\right\}.
\]
The construction below also applies to boundary states after regularization.

\subsection{Regularized mobility}

For nearly pure states, the density matrix \(\rho\) may have zero eigenvalues. We therefore introduce a regularized state
\[
\rho_\varepsilon
=
(1-\varepsilon)\rho
+
\varepsilon\frac{\Id}{d},
\qquad
0<\varepsilon<1,
\]
where \(d=M+1\). This ensures that \(\rho_\varepsilon\) is strictly positive.

We use the symmetric noncommutative mobility
\[
\mathcal K_{\rho_\varepsilon}(Y)
=
\frac12
\left(
\rho_\varepsilon Y
+
Y\rho_\varepsilon
\right).
\]
Since \(\rho_\varepsilon>0\), the map \(\mathcal K_{\rho_\varepsilon}\) is invertible on \(\Aalg_M\). Other choices, such as logarithmic means, can be used, but the symmetric mobility is sufficient for the present article and leads to a simple quadratic optimization problem.

The Hilbert--Schmidt pairing is
\[
\langle A,B\rangle_{\mathrm{HS}}
=
\Tr(A^\dagger B).
\]
In quadratic costs we use the real part when necessary.

\subsection{Fluxes and divergence}

Let
\[
\Gset=\{V_j\}_{j=1}^r
\]
be a finite family of admissible directions. A flux is a collection of matrices
\[
J=(J_1,\ldots,J_r),
\qquad
J_j\in\Aalg_M.
\]
The divergence associated with \(\Gset\) is defined by
\[
\operatorname{div}_{\Gset}(J)
=
\sum_{j=1}^r
\left(
J_jV_j^\dagger
-
V_j^\dagger J_j
\right).
\]
The local continuity equation is
\[
X+\operatorname{div}_{\Gset}(J)=0.
\]
Thus a tangent vector \(X\) is represented by fluxes along the physical directions \(V_j\).

This is the noncommutative analogue of writing a velocity field as the divergence of a current. The important difference is that the available currents are not arbitrary spatial vector fields; they are operator-valued fluxes tied to the chosen transition directions.

\subsection{Weighted flux cost}

Let
\[
w=(w_1,\ldots,w_r),
\qquad
w_j>0,
\]
be positive resistance weights. The cost of a flux $J$ at the state $\rho$ is
\[
\mathcal C_{\rho_\varepsilon,w}(J)
=
\sum_{j=1}^r
w_j\operatorname{Re}\left\langle
J_j,\mathcal K_{\rho_\varepsilon}^{-1}(J_j)
\right\rangle_{\mathrm{HS}}.
\]
Because $\mathcal K_{\rho_\varepsilon}$ is positive, each term is nonnegative. The model datum is the weighted frame $(\Gset,w)$ rather than the bare list $\Gset$. The simultaneous rescaling $V_j\mapsto c_jV_j$, $w_j\mapsto|c_j|^2w_j$ is a change of frame coordinates and leaves the induced cost invariant.

\subsection{Local cost of a tangent vector}

The local quadratic cost of a tangent vector $X$ is
\[
\mathfrak g_{\rho,\varepsilon}^{\Gset,w}(X,X)
=
\inf_{J:\,X+\operatorname{div}_{\Gset}(J)=0}
\mathcal C_{\rho_\varepsilon,w}(J),
\]
with value $+\infty$ outside the divergence image. For all frames used numerically here, $\Gset_1\subseteq\Gset$, so every Hermitian traceless tangent belongs to that image and the cost is finite up to numerical tolerance.

The construction is homogeneous of degree two:
\[
\mathfrak g_{\rho,\varepsilon}^{\Gset,w}(\lambda X,\lambda X)
=\lambda^2\mathfrak g_{\rho,\varepsilon}^{\Gset,w}(X,X),
\qquad \lambda\in\R.
\]
The same property holds after the transverse quotient. Hence absolute numerical values depend on the physical normalization and units of $X$. Unless explicitly stated otherwise, the figures use the raw tangents written in the text, with no Hilbert--Schmidt or unit-velocity normalization. In particular, $X_{\mathrm{deph}}=2\Diss[N](\rho)$ is used exactly as defined.

The square root
\[
\|X\|_{\rho,\varepsilon;\Gset,w}
=
\sqrt{\mathfrak g_{\rho,\varepsilon}^{\Gset,w}(X,X)}
\]
is a local seminorm on the finite-cost tangent sector. It is not an intrinsic global distance between density matrices.

\subsection{Interpretation}

A low cost means that $X$ has an economical flux representation in the chosen weighted frame. A high cost means that the representation is poorly conditioned or physically indirect. For example, a two-photon loss tangent becomes inexpensive when $L_2$ is present as a direct frame element, although it was already algebraically representable with $\Gset_1$. Likewise, a dephasing tangent becomes inexpensive after adding diagonal directions. This dependence is the intended environment-aware feature.

\section{Transverse quotient: separating unitary and dissipative variations}
\label{sec:transverse-quotient}

The local cost introduced in Sec.~\ref{sec:local-cost} measures the representation efficiency of a tangent vector in the chosen weighted frame $(\Gset,w)$. However, it does not yet distinguish between reversible and dissipative variations. This distinction is essential for the physical problem considered here.

A variation generated by a Hamiltonian is reversible. It moves the state along its unitary orbit and does not correspond, by itself, to loss of information to the environment. We therefore introduce a transverse cost that removes the unitary component of a tangent vector before assigning a dissipative cost.

\subsection{Unitary orbits}

Let \(\rho\in\mathcal D_M\). Its unitary orbit is
\[
\mathcal O_\rho
=
\{U\rho U^\dagger:U\in\mathcal U(\Hilb_M)\}.
\]
This orbit contains all states with the same spectrum as \(\rho\). It represents the reversible Hamiltonian motions available at the level of density matrices.

The tangent space to the unitary orbit at \(\rho\) is
\[
T_\rho^{\mathrm{unit}}
=
\{-\ii[H,\rho]:H=H^\dagger\}.
\]
If \(X\in T_\rho^{\mathrm{unit}}\), then \(X\) changes the eigenbasis of \(\rho\), but not its eigenvalues. Such a variation is therefore reversible.

The complement of \(T_\rho^{\mathrm{unit}}\) is not canonical. We do not choose a fixed linear complement. Instead, we define a quotient cost by minimizing over all unitary tangent directions.

\subsection{Definition of the transverse cost}

Let \(X\in T_\rho\mathcal D_M\). We define the transverse cost of \(X\) by
\[
\mathfrak g_{\rho,\varepsilon}^{\perp,\Gset,w}(X,X)
=
\inf_{H=H^\dagger}
\mathfrak g_{\rho,\varepsilon}^{\Gset,w}
\left(
X+\ii[H,\rho],
X+\ii[H,\rho]
\right).
\]
Equivalently, \(X\) and \(X'\) are regarded as equivalent if
\[
X'-X\in T_\rho^{\mathrm{unit}}.
\]
The transverse cost is the minimal local transport cost among all representatives of this equivalence class.

This construction has a direct physical interpretation. The Hamiltonian part of a variation is treated as reversible and is removed. In the present paper the infimum runs over \emph{all} Hermitian Hamiltonians, so the quotient is by the full unitary orbit. A future model may instead choose an experimentally reversible subspace $\mathfrak h_{\mathrm{rev}}\subseteq\{H=H^\dagger\}$ and restrict the infimum to $H\in\mathfrak h_{\mathrm{rev}}$; this would retain coherent motions that are not available as reversible controls.

\subsection{Vanishing on reversible variations}

The transverse cost vanishes on unitary tangent vectors.

\begin{proposition}[Vanishing on unitary directions]
Let \(X=-\ii[H_0,\rho]\) for some Hermitian operator \(H_0\). Then
\[
\mathfrak g_{\rho,\varepsilon}^{\perp,\Gset,w}(X,X)=0.
\]
\end{proposition}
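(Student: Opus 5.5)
The plan is to exhibit a specific Hamiltonian in the infimum defining the transverse cost that makes the representative exactly zero, and then check that the raw local cost of the zero tangent vanishes.

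\emph{Step 1 (choice of Hamiltonian).} Take $H=H_0$ in the definition of $\mathfrak g_{\rho,\varepsilon}^{\perp,\Gset,w}$. The shifted tangent is
\[
X+\ii[H_0,\rho]=-\ii[H_0,\rho]+\ii[H_0,\rho]=0 .
\]
Since the transverse cost is an infimum over all Hermitian $H$, and every raw cost is nonnegative, we get
\[
0\le\mathfrak g_{\rho,\varepsilon}^{\perp,\Gset,w}(X,X)\le\mathfrak g_{\rho,\varepsilon}^{\Gset,w}(0,0).
\]
Nonnegativity holds because each term of $\mathcal C_{\rho_\varepsilon,w}$ is nonnegative, as noted after its definition, using $w_j>0$ and the positivity of $\mathcal K_{\rho_\varepsilon}^{-1}$.

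\emph{Step 2 (the zero tangent has zero cost).} The zero flux $J=(0,\ldots,0)$ satisfies the continuity equation $0+\operatorname{div}_{\Gset}(0)=0$, because the divergence is linear in $J$. Its cost is $\mathcal C_{\rho_\varepsilon,w}(0)=0$, since $\mathcal K_{\rho_\varepsilon}^{-1}$ is linear. Hence $\mathfrak g_{\rho,\varepsilon}^{\Gset,w}(0,0)=0$. The same conclusion also follows from degree-two homogeneity with $\lambda=0$. This zero value is a genuine finite minimum, not an instance of the $+\infty$ convention, because $0$ lies in the divergence image for every frame. So the result does not even require $\Gset_1\subseteq\Gset$ or the image proposition.

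\emph{Step 3 (combine).} Combining the two steps squeezes the transverse cost to $0$.

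There is no real obstacle. The only points to watch are two bookkeeping facts. First, $H_0$ must be an admissible competitor, which it is because the infimum runs over all Hermitian operators. Second, the regularization only enters through the invertibility of $\mathcal K_{\rho_\varepsilon}$, which is guaranteed by $\rho_\varepsilon>0$.

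It is worth noting that if the infimum were restricted to a reversible subalgebra $\mathfrak h_{\mathrm{rev}}$, the same argument would require $H_0\in\mathfrak h_{\mathrm{rev}}$.
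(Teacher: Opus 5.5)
Your proof is correct and follows the paper's argument exactly: choose $H=H_0$, note that the shifted tangent is zero, and observe that the zero flux represents it at zero cost. The only addition is that you state the nonnegativity lower bound explicitly (which the paper takes from its Positivity proposition), and that makes the squeeze to zero fully rigorous.
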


\begin{proof}
Choose \(H=H_0\) in the definition of the transverse cost. Then
\[
X+\ii[H,\rho]
=
-\ii[H_0,\rho]+\ii[H_0,\rho]
=
0.
\]
The zero tangent vector is represented by the zero flux \(J=0\), whose cost is zero. Hence the infimum is zero.
\end{proof}

This is the basic property that distinguishes the transverse cost from the raw local cost. A phase rotation can have nonzero raw cost but zero transverse cost.

\subsection{Dissipative variations remain transverse}

By contrast, a genuinely dissipative tangent vector is not generally tangent to the unitary orbit. For example, the dephasing direction
\[
X_{\mathrm{deph}}
=
-[\Nop,[\Nop,\rho]]
\]
changes the coherence structure of the state and cannot, in general, be written as
\[
-\ii[H,\rho].
\]
Similarly, Lindblad variations such as
\[
X_{\mathrm{loss},k}
=
\Diss[L_k](\rho)
\]
usually change the spectrum of \(\rho\), and therefore have a nonzero transverse component.

The transverse cost is designed to detect precisely this difference:
\[
\text{unitary rotation}
\quad\Rightarrow\quad
\text{zero transverse cost},
\]
whereas
\[
\text{dephasing or loss}
\quad\Rightarrow\quad
\text{nonzero transverse cost}.
\]

\subsection{Transverse seminorm on the quotient}

Define
\[
\|X\|_{\rho,\varepsilon;\Gset,w,\perp}
=
\sqrt{\mathfrak g_{\rho,\varepsilon}^{\perp,\Gset,w}(X,X)}.
\]
This is a seminorm on the full tangent space and a norm on the quotient by the zero directions of the quadratic form, which include $T_\rho^{\mathrm{unit}}$. It is not a distance between finite-separated states. Its quadratic homogeneity makes the normalization of a tangent explicit:
\[
\|\lambda X\|_{\perp}=|\lambda|\,\|X\|_{\perp}.
\]

\subsection{Local transverse representation cost}

The transverse cost combines an optimization over reversible Hamiltonian motion with a weighted flux representation of the remainder. A tangent has low transverse cost if, after subtracting a suitable Hamiltonian component, it is represented economically by $(\Gset,w)$. Because algebraic accessibility is already complete for $\Gset_1$ in the finite truncation, comparisons between $\Gset_1$, $\Gset_3$ and $\Gset_3^{\mathrm{diag}}$ below are comparisons of efficiency and direct adaptation, not binary reachability tests.

\section{Numerical use cases}
\label{sec:numerics}

The numerical section is written as a minimal experimental workflow. We start from representative states, choose several candidate direction families, and ask which perturbations are compatible with each family after unitary motion has been removed. The five figures should be read as successive tests of the toolbox: Fig.~\ref{fig:test-states} defines the states; Fig.~\ref{fig:raw-vs-transverse} tests reversibility removal; Fig.~\ref{fig:gset-dependence} tests environment selection; Fig.~\ref{fig:thermal-weights} tests thermal or modular weighting; and Fig.~\ref{fig:observable-dual} tests the induced observable sensitivity.

All computations are performed in a truncated Fock space. Unless stated otherwise, the numerical tests use \(M=10\), \(d=M+1\), and \(\varepsilon=10^{-5}\). The values reported below are regularized local costs. Their absolute magnitude depends on the cutoff, the mobility, the weights and the direction family. The robust information is the hierarchy between perturbations and its dependence on \(\Gset\).

The practical protocol is the following. Choose a state $\rho$, candidate perturbations $X$, and one or several weighted frames $(\Gset,w)$, then compare transverse costs using the same tangent normalization. The comparison is not meant to identify a unique microscopic model by itself. It tests which frame represents the measured tangent economically and directly.

\subsection{States and perturbations}

We use four representative states: a coherent state, the Fock state \(\ket{3}\), an even cat state and a compass-like state. They are shown in Fig.~\ref{fig:test-states} both through their Wigner functions and through the modulus of their density matrices in the Fock basis. The coherent state gives a quasi-classical reference. The Fock state is non-Gaussian without phase-space interference fringes. The cat and compass-like states display coherence patterns and Wigner structures that are sensitive to dephasing and multiphoton transitions.

For a state \(\rho\), the perturbations used in the diagnostics are
\[
X_{\mathrm{rot}}=-\ii[\Nop,\rho],
\qquad
X_{\mathrm{deph}}=-[\Nop,[\Nop,\rho]],
\]
\[
X_{\mathrm{loss},k}=\Diss[L_k](\rho),
\qquad
X_{\mathrm{gain},k}=\Diss[L_k^\dagger](\rho).
\]
The first perturbation is Hamiltonian. The other perturbations are dissipative in the sense of open quantum dynamics \cite{lindblad_1976,gorini_kossakowski_sudarshan_1976,breuer_petruccione_2002,gardiner_zoller_2004}.

For a short experimental evolution, tomography or a reduced reconstruction provides the finite-difference estimate
\[
X_{\mathrm{exp}}
\simeq
\frac{\rho_{\mathrm{out}}-\rho_{\mathrm{in}}}{\Delta t}.
\]
Before evaluating the cost, statistical reconstruction errors may be removed by Hermitian symmetrization and trace-zero projection. Because the cost is quadratic, omitting the factor $1/\Delta t$ multiplies all reported costs by $\Delta t^2$. The time step and any additional tangent normalization must therefore be stated explicitly.

\begin{figure}[ht]
\centering
\includegraphics[width=0.6\linewidth]{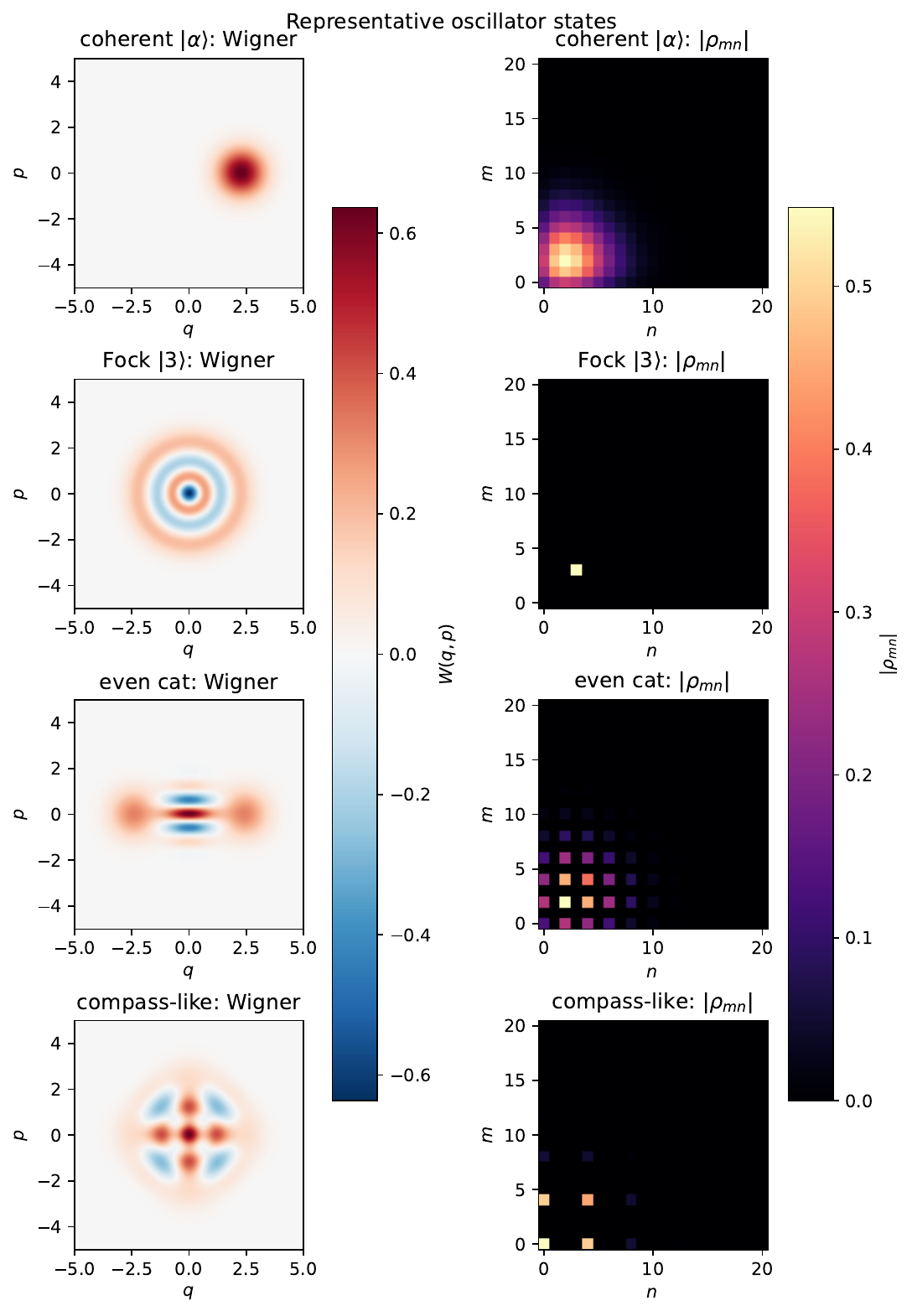}
\caption{
Representative oscillator states used in the numerical diagnostics. Each state is shown through its Wigner function and through the modulus of its density matrix in the Fock basis. The Wigner plots are used only as diagnostics; the transport geometry itself is defined on the density matrix and on the operator directions in Fock space.
}
\label{fig:test-states}
\end{figure}

\subsection{Raw cost versus transverse cost}

The first diagnostic isolates the Hamiltonian quotient. We compare the raw and transverse quadratic costs for the compass-like state in the unit-weight frame
\[
\Gset_3=\{L_1,L_1^\dagger,L_2,L_2^\dagger,L_3,L_3^\dagger\}.
\]
All bars use the raw physical tangents, without normalization. Figure~\ref{fig:raw-vs-transverse} shows that the phase rotation has raw cost $1.13\times10^{-1}$ but transverse cost $7.3\times10^{-27}$. Dephasing remains at $2.25\times10^5$, while the one-, two- and three-photon loss tangents have nonzero transverse costs $1.63$, $4.71$ and $11.2$. The quotient therefore removes the reversible component without normalizing away dissipative amplitudes.

\begin{figure}[t]
\centering
\includegraphics[width=0.6\linewidth]{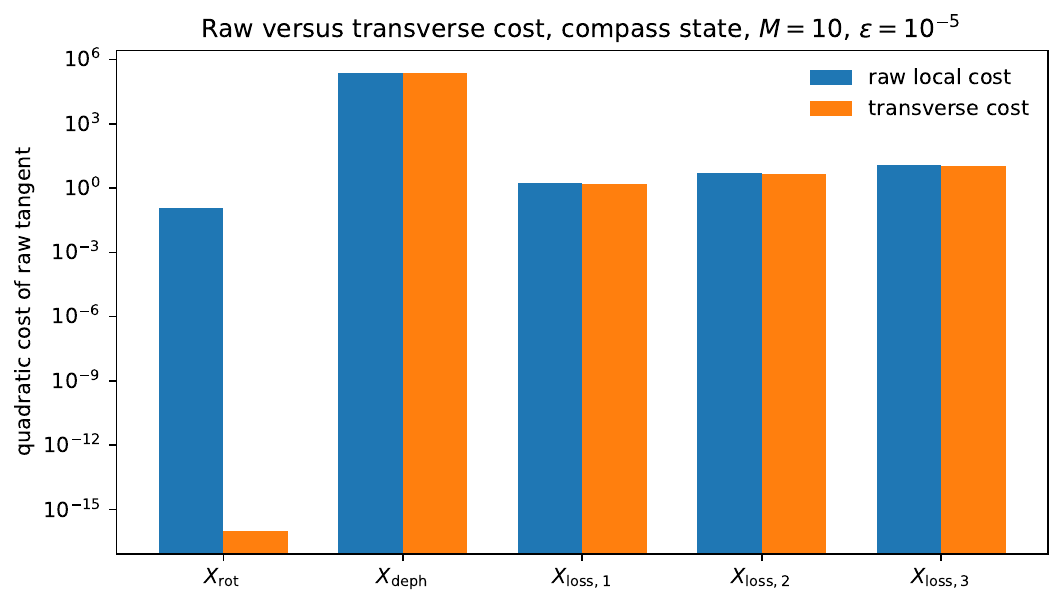}
\caption{
Raw and transverse quadratic costs for raw, unnormalized tangents of a compass-like state, with $M=10$, $\varepsilon=10^{-5}$ and frame $\Gset_3$ at unit weights. The phase rotation is removed by the quotient; dephasing and Lindblad loss tangents remain nonzero.
}
\label{fig:raw-vs-transverse}
\end{figure}

\subsection{Dependence on the weighted frame}

The second diagnostic compares
\[
\Gset_1=\{L_1,L_1^\dagger\},
\qquad
\Gset_3=\{L_1,L_1^\dagger,L_2,L_2^\dagger,L_3,L_3^\dagger\},
\]
and
\[
\Gset_3^{\mathrm{diag}}
=
\Gset_3\cup\{\widetilde N,\widetilde N^2,\widetilde N^3\},
\]
where each diagonal power is centered by removing its identity component. All weights are one and all perturbations retain their raw physical normalization.

Figure~\ref{fig:gset-dependence} must be read as an efficiency comparison. The divergence image has rank $d^2-1=120$ for all three frames. Nevertheless, the transverse cost of $X_{\mathrm{loss},2}$ falls from $3.74\times10^6$ in $\Gset_1$ to $4.71$ in $\Gset_3$, and the cost of $X_{\mathrm{loss},3}$ falls from $6.32\times10^6$ to $11.2$. The corresponding multiphoton operators therefore make these tangents inexpensive and directly adapted; they do not make them algebraically accessible for the first time. Dephasing remains expensive in $\Gset_3$ at $2.25\times10^5$ and becomes inexpensive in $\Gset_3^{\mathrm{diag}}$ at $6.76\times10^{-1}$.

\begin{figure}[t]
\centering
\includegraphics[width=0.7\linewidth]{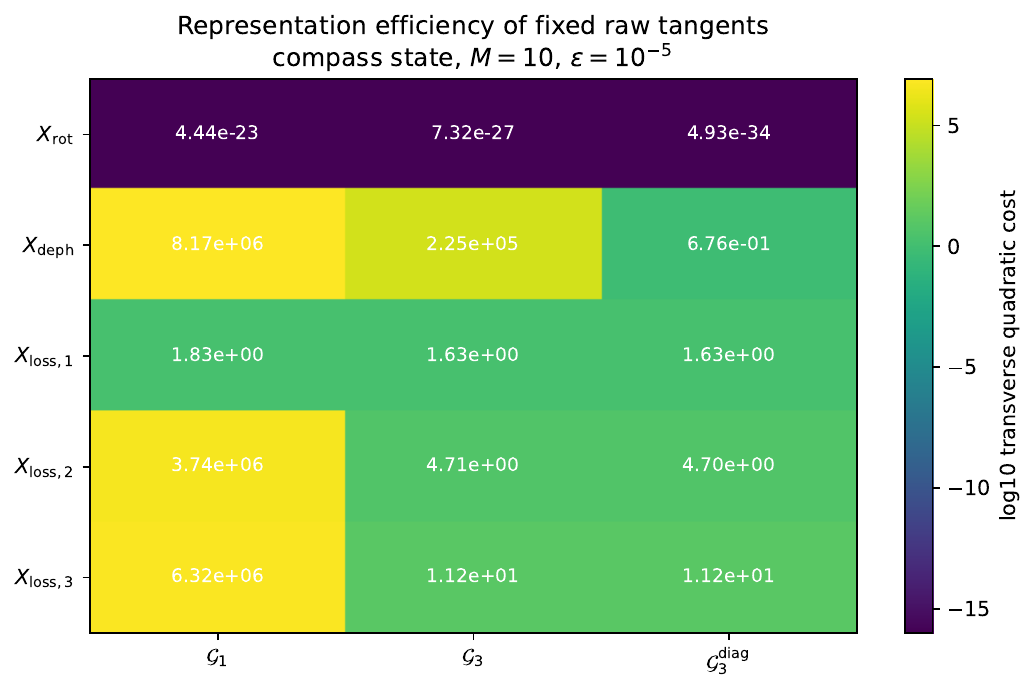}
\caption{
Representation efficiency of fixed raw tangents. The heatmap shows $\log_{10}\mathfrak g_{\rho,\varepsilon}^{\perp,\Gset,w}(X,X)$ for a compass-like state. Every frame already spans the trace-zero sector; higher-order and diagonal directions lower selected costs by providing direct, well-conditioned representations. The rotation has numerical-zero transverse cost in all cases.
}
\label{fig:gset-dependence}
\end{figure}

\subsection{Thermal weights and modular covariance}

We now orient the weighted frame with a truncated thermal reference
\[
\sigma_{\beta,M}
=
\frac{\ee^{-\beta\Nop_M}}{\Tr(\ee^{-\beta\Nop_M})}.
\]
The exact finite-dimensional modular relations are
\[
\sigma_{\beta,M}a_M^k\sigma_{\beta,M}^{-1}
=
\ee^{\beta k}a_M^k,
\qquad
\sigma_{\beta,M}(a_M^\dagger)^k\sigma_{\beta,M}^{-1}
=
\ee^{-\beta k}(a_M^\dagger)^k.
\]
This supplies a level-B modular justification: the operator frame has the correct modular covariance and the resistance weights are chosen as inverse rates. Writing $\gamma_k^-$ and $\gamma_k^+$ for loss and gain rates, we impose
\[
\frac{\gamma_k^+}{\gamma_k^-}=\ee^{-\beta k},
\qquad
w_k^\pm\propto\frac1{\gamma_k^\pm},
\]
and fix the common scale by
\[
w_k^-=1,
\qquad
w_k^+=\ee^{\beta k}.
\]
The frame covariance $V_j\mapsto c_jV_j$, $w_j\mapsto|c_j|^2w_j$ guarantees that this statement is independent of scalar jump-operator conventions.

Figure~\ref{fig:thermal-weights} uses the \emph{raw} Lindblad tangents
\[
X_{\mathrm{loss},k}=\Diss[L_k](\rho),
\qquad
X_{\mathrm{gain},k}=\Diss[L_k^\dagger](\rho),
\]
with no tangent normalization, and plots
\[
R_k(\beta)=
\frac{\mathfrak g_{\rho,\varepsilon,w_\beta}^{\perp}(X_{\mathrm{gain},k},X_{\mathrm{gain},k})}
{\mathfrak g_{\rho,\varepsilon,w_\beta}^{\perp}(X_{\mathrm{loss},k},X_{\mathrm{loss},k})}.
\]
At $\beta=0$ the ratios need not equal one because the gain and loss tangents themselves have different Hilbert--Schmidt amplitudes and state dependence. Over $0\leq\beta\leq4$, $R_1$ remains near one, $R_2$ grows from $2.98$ to $5.60$, and $R_3$ grows from $8.65$ to $108$.

This covariance does \emph{not} constitute a complete KMS transport geometry. The mobility remains the arithmetic mean $\mathcal K_\rho(Y)=(\rho Y+Y\rho)/2$, rather than a logarithmic or Kubo--Mori mobility tied to a KMS-symmetric Dirichlet form. The figure therefore demonstrates modularly oriented frame weights, not an entropy-gradient or full detailed-balance geometry.

\begin{figure}[t]
\centering
\includegraphics[width=0.7\linewidth]{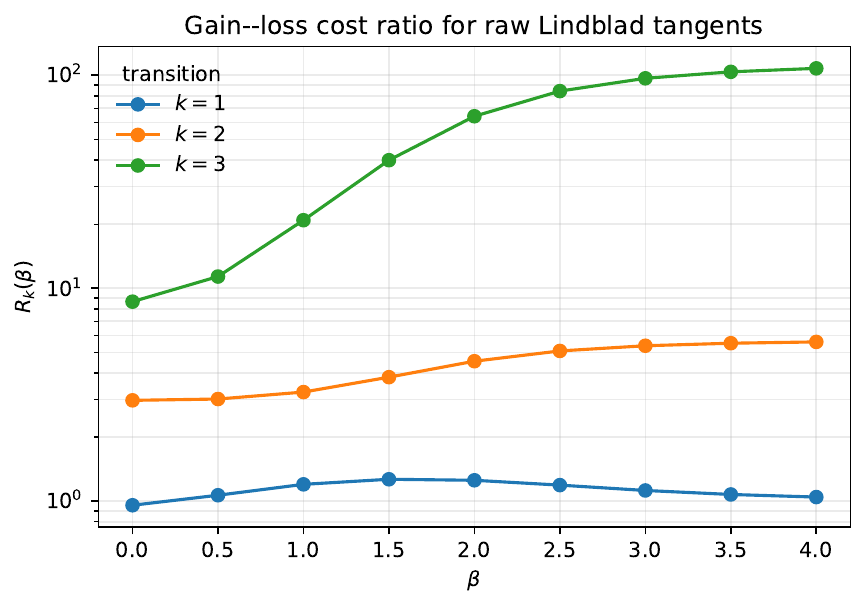}
\caption{
Gain--loss cost ratios for raw, unnormalized Lindblad tangents of a compass-like state, with $M=10$ and $\varepsilon=10^{-5}$. The resistance weights satisfy $w_k^-=1$ and $w_k^+=\exp(\beta k)$, corresponding to inverse rates with $\gamma_k^+/\gamma_k^-=\exp(-\beta k)$. The operator frame has exact modular covariance, but the arithmetic mobility means that this is not a complete KMS geometry.
}
\label{fig:thermal-weights}
\end{figure}

\subsection{Cotangent diagnostic on the diagonal Fock graph}

A full dual of the transverse seminorm requires care. Since every Hamiltonian tangent $Y=-\ii[H,\rho]$ has zero transverse seminorm, the formal quotient dual
\[
\sup_{X\neq0}
\frac{|\Tr(AX)|^2}{\mathfrak g_{\rho,\varepsilon}^{\perp,\Gset,w}(X,X)}
\]
is finite only if the observable annihilates all unitary tangents. By cyclicity of the trace, this condition is equivalent to
\[
[A,\rho]=0.
\]
For a generic compass state, diagonal Fock observables do not satisfy this condition, so Figure~\ref{fig:observable-dual} must not be interpreted as the full dual of the compass-state problem.

Instead, the figure is a controlled cotangent diagnostic on the diagonal Fock graph at the diagonal thermal reference $p_n\propto\ee^{-\beta_0n}$. The operators $L_k$ induce edges between levels separated by $k$, with conductance equal to the squared Fock transition amplitude times the arithmetic population mobility. For a diagonal observable $A=f(\Nop)$ we compute the graph cotangent energy
\[
\|A\|_{\mathrm{cot},\Gset_K}^2
=
\sum_{1\leq k\leq K}\sum_{n\geq k}
 c_{n-k,n}^{(k)}|f(n)-f(n-k)|^2.
\]
Because both the reference and the observables are diagonal, $[A,\rho_{\mathrm{diag}}]=0$ and the finite-dual obstruction is absent in this restricted problem.

We plot
\[
Q_A=
\frac{\|A\|_{\mathrm{cot},\Gset_3}}
{\|A\|_{\mathrm{cot},\Gset_1}}.
\]
The largest enhancements occur for energy-like and high-photon-number observables: $Q_N\simeq10.7$, $Q_{N^2}\simeq19.9$, and $Q_{\Pi_{n\geq6}}\simeq15.7$. This is an observable-side graph diagnostic of the chosen transitions, not a complete cotangent characterization of the noncommutative compass geometry.

\begin{figure}[t]
\centering
\includegraphics[width=0.7\linewidth]{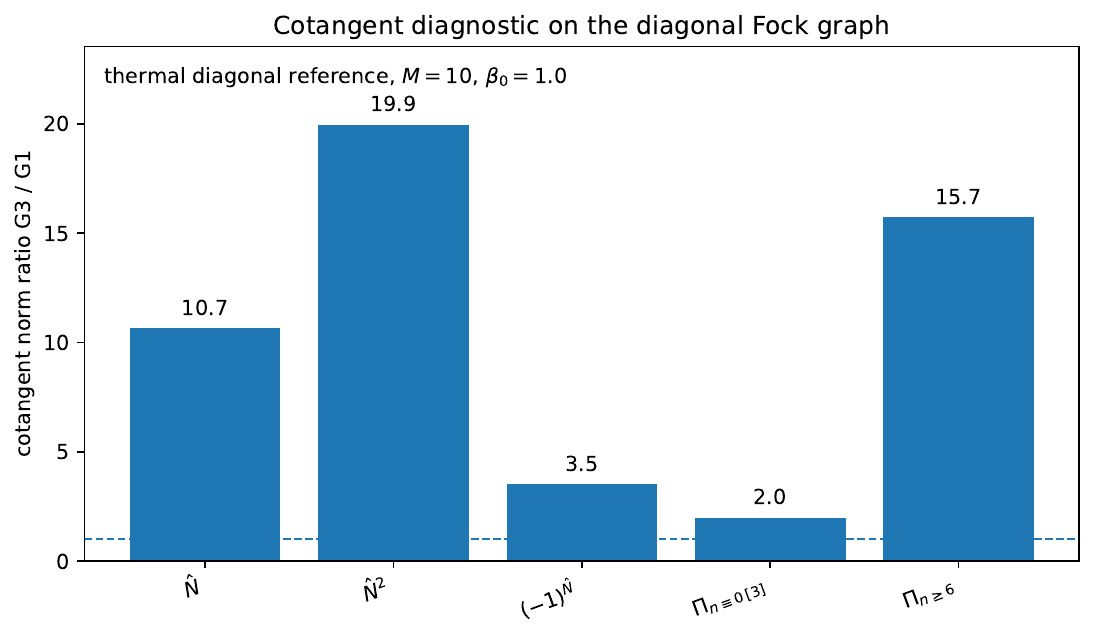}
\caption{
Cotangent diagnostic on the diagonal Fock graph, not the full transverse dual of the compass-state problem. The ratio compares graph cotangent energies for the one-photon and one-to-three-photon edge sets at a diagonal thermal reference with $M=10$ and $\beta_0=1$. Diagonal observables commute with this reference, satisfying the finiteness condition for the restricted transverse dual.
}
\label{fig:observable-dual}
\end{figure}

\subsection{Numerical conclusions}

The diagnostics establish four distinct points. The quotient removes the Hamiltonian phase rotation. The one-photon frame is already algebraically complete on the finite trace-zero sector, but higher-order and diagonal frame elements can reduce selected representation costs by many orders of magnitude. Modularly oriented inverse-rate weights polarize raw gain and loss tangents without defining a full KMS geometry. The observable-side calculation is valid as a diagonal graph cotangent diagnostic and is deliberately not identified with the full compass-state dual.

All conclusions are local and normalization-dependent. Image residuals, cutoff variation, regularization variation, quadratic homogeneity and weighted-frame covariance are reported in Appendix~\ref{app:robustness} and in the accompanying data files.

\section{Relation to existing quantum Wasserstein viewpoints}
\label{sec:relations}

The construction introduced above is inspired by optimal transport ideas, but it should not be confused with a general quantum Wasserstein distance. Its purpose is more limited and more directly operational: it is a local test for deciding whether an observed or modeled infinitesimal variation is a reversible motion or a dissipative change compatible with a chosen environment.

In this section we clarify the relation with three nearby viewpoints: noncommutative transport metrics associated with quantum Markov semigroups, quantum Wasserstein distances of order one, and measurement-aware distinguishability.

\subsection{Noncommutative transport and entropy gradient flows}

A major line of work in noncommutative optimal transport constructs transport metrics from quantum Markov semigroups satisfying detailed balance. In the approach of Carlen and Maas, and in related developments by Wirth, the metric is tied to a noncommutative continuity equation, a mobility operator and a Dirichlet-form structure \cite{carlen_maas_2017,wirth_2018}. Under suitable assumptions, the quantum Markov semigroup becomes the gradient flow of the relative entropy.

Our construction shares some of these ingredients. We also use a noncommutative continuity equation, a mobility and a quadratic flux cost. However, the starting point is different. We do not begin with a fixed quantum Markov semigroup and ask for the metric that makes it an entropy gradient flow. Instead, we begin with a finite list of physically selected directions
\[
\Gset=\{V_j\}_j,
\]
such as one-photon and multiphoton transitions, and ask how efficiently selected tangent vectors are represented by them.

The difference can be summarized as follows: semigroup-based transport builds a metric adapted to a given dynamics, whereas the present construction assigns a local representation cost to a weighted operator frame and then quotients Hamiltonian tangents. The frame is not asserted to determine a unique GKSL generator.

This distinction is important. The present framework is not designed to prove an entropy gradient-flow structure. It is designed to compare reversible and dissipative components of variations of non-Gaussian oscillator states.

\subsection{Quantum Wasserstein distances of order one}

Another relevant viewpoint is provided by quantum Wasserstein distances of order one. In the construction of De Palma, Marvian, Trevisan and Lloyd, the distance is connected with a notion of locality and with a dual characterization in terms of Lipschitz observables \cite{depalma_marvian_trevisan_lloyd_2021}. Related work by De Palma and Trevisan develops quantum optimal-transport structures using quantum channels \cite{depalma_trevisan_2021}.

This viewpoint is close to one aspect of our construction: the set \(\Gset\) defines a notion of neighborhood or elementary transition. In the diagonal Fock basis, the directions \(L_k\) connect occupation numbers separated by \(k\), and therefore define a graph-like structure on the truncated number basis. In this sense, changing \(\Gset\) changes the notion of locality.

However, our object is not a global $W_1$ distance. It is a local quadratic cost on tangent vectors, regularized at a state $\rho$, and then quotiented by unitary directions. A full transverse dual is finite only on observables commuting with $\rho$; the numerical observable-side example is therefore restricted to a diagonal thermal graph. The construction is closer to a local representation and cotangent diagnostic than to a canonical global distance.

The relation with \(W_1\)-type ideas is nevertheless useful. It clarifies why changing the available transitions has two complementary effects. On the primal side, adding directions may reduce the cost of certain state variations. On the dual side, it changes which observables are considered regular or sensitive. This is why the numerical diagnostics include transverse costs together with a deliberately restricted graph cotangent norm.

\subsection{Why the transverse quotient is not standard}

The main feature that distinguishes the present construction is the transverse quotient
\[
X
\sim
X+\ii[H,\rho].
\]
This quotient removes the tangent space to the unitary orbit before assigning a dissipative cost.

Standard quantum Wasserstein distances do not usually remove Hamiltonian directions in this way. They may measure the distance between states, or control expectation-value differences for Lipschitz observables, but they do not directly answer the following local question:
\[
\text{which part of }X
\text{ is dissipative after subtracting the best unitary motion?}
\]

This question is natural for open oscillator systems. A phase rotation of a cat or compass state may strongly change its Wigner representation, but it is reversible. Dephasing, loss and gain are different: they generally change the spectrum or destroy coherences. The transverse quotient is introduced precisely to make this distinction explicit.

\subsection{Relation with measurement-aware distinguishability}

The motivation is also related to measurement-aware distinguishability. In a previous preprint, we studied how different levels of description---state-space transport, projected quantities and detector statistics---lead to different notions of useful distinguishability in lossy quantum sensing \cite{our_measurement_aware_2026}.

The present work follows the same methodological principle: a distance or cost is useful only relative to a physical context. The difference is that the context is now encoded by environmental directions rather than by a measurement chain. In this sense, the present paper moves from measurement-aware distinguishability to environment-aware transverse transport.

The two perspectives are complementary. Measurement-aware distinguishability asks which differences remain visible after measurement. Environment-aware transverse transport asks how the non-Hamiltonian part of a variation is represented by a specified environmental frame. The diagonal graph cotangent diagnostic provides a limited observable-side bridge between the two viewpoints.

\subsection{Summary of the positioning}

The present framework should therefore be read as a local diagnostic rather than as a competing general theory of quantum transport. Compared with semigroup-based noncommutative transport, it is more directly tied to selected physical directions. Compared with \(W_1\)-type quantum Wasserstein distances, it is local, quadratic and state-dependent. Compared with measurement-aware distinguishability, it focuses on the environmental origin of variations rather than on the measurement chain.

Its intended use is consequently narrow but concrete: it quantifies the dissipative representation cost of non-Gaussian oscillator variations relative to a chosen environment.

\section{Conclusion}
\label{sec:conclusion}

We introduced an environment-aware local transport cost for infinitesimal variations of truncated non-Gaussian oscillator states. Its square root after quotienting by Hamiltonian tangents is a transverse seminorm, not a global distance. The current quotient uses all Hermitian Hamiltonians; a restricted reversible-control space $\mathfrak h_{\mathrm{rev}}$ is a natural future refinement.

A central finite-dimensional fact is that $\Gset_1=\{a_M,a_M^\dagger\}$ already generates the full trace-zero sector through the divergence. The numerical role of multiphoton and diagonal operators is therefore to improve representation efficiency and direct physical adaptation. This distinction removes the misleading claim that higher-order loss tangents first become algebraically accessible when their jump operators are inserted.

The model datum is the weighted operator frame $(\Gset,w)$, not a unique GKSL decomposition. The covariance $V_j\mapsto c_jV_j$, $w_j\mapsto|c_j|^2w_j$ makes the cost invariant under scalar jump-operator reparameterization. Thermal inverse-rate weights are supported by the exact modular covariance of $a_M^k$ and $(a_M^\dagger)^k$, while the arithmetic mobility prevents interpreting the construction as a complete KMS geometry.

The numerical figures use explicitly stated raw tangents, including $X_{\mathrm{deph}}=-[N,[N,\rho]]=2\Diss[N](\rho)$. The experimental tangent may be estimated as $(\rho_{\mathrm{out}}-\rho_{\mathrm{in}})/\Delta t$, with the resulting quadratic time normalization reported. Image residuals and optimization residuals are small, the principal hierarchy is stable under the tested cutoffs and regularizations, and frame covariance and quadratic homogeneity hold at approximately $10^{-12}$ relative accuracy.

Finally, the observable-side figure is only a cotangent diagnostic on a diagonal Fock graph. A finite full transverse dual requires $[A,\rho]=0$; this condition generally fails for diagonal observables at a compass state. Future work may replace the arithmetic mobility by a logarithmic/Kubo--Mori mean, restrict the reversible Hamiltonian quotient, and combine environmental representation costs with measurement-aware distinguishability.

\appendix

\section{Elementary properties}
\label{app:elementary-properties}

This appendix collects a few elementary properties of the local and transverse costs. The purpose is not to develop a global metric theory, but to make explicit the finite-dimensional facts used in the main text.

Throughout the appendix, the Fock cutoff \(M\), the state \(\rho\), the regularization parameter \(\varepsilon>0\), the direction family
\[
\Gset=\{V_j\}_{j=1}^r,
\]
and the positive weights \(w_j>0\) are fixed. We write
\[
\rho_\varepsilon=(1-\varepsilon)\rho+\varepsilon\frac{\Id}{d},
\qquad
d=M+1.
\]

\subsection{Positivity of the flux cost}

The regularized mobility is
\[
\mathcal K_{\rho_\varepsilon}(Y)
=
\frac12
\left(
\rho_\varepsilon Y+Y\rho_\varepsilon
\right).
\]
Since \(\rho_\varepsilon\) is strictly positive, the map \(\mathcal K_{\rho_\varepsilon}\) is positive and invertible on the finite-dimensional operator space \(\Aalg_M\).

\begin{proposition}[Positivity]
For every flux \(J=(J_1,\ldots,J_r)\),
\[
\mathcal C_{\rho_\varepsilon,w}(J)\geq 0.
\]
Consequently, for every tangent vector \(X\),
\[
\mathfrak g_{\rho,\varepsilon}^{\Gset,w}(X,X)\geq 0,
\]
with the convention that the value may be \(+\infty\).
\end{proposition}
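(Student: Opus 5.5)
The plan is to show that each individual term $\operatorname{Re}\langle J_j,\mathcal K_{\rho_\varepsilon}^{-1}(J_j)\rangle_{\mathrm{HS}}$ is nonnegative. Since $w_j>0$, this gives $\mathcal C_{\rho_\varepsilon,w}(J)\geq0$. The statement about $\mathfrak g$ then follows immediately, because an infimum of nonnegative numbers is nonnegative, and the infimum over the empty set is $+\infty$ by the stated convention.

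For the key step I will diagonalize the regularized state, $\rho_\varepsilon=\sum_i\lambda_i\ket{e_i}\bra{e_i}$. Every eigenvalue satisfies $\lambda_i\geq\varepsilon/d>0$, because $(1-\varepsilon)\rho\geq0$ and the shift adds $\varepsilon\Id/d$. In the matrix units $E_{il}=\ket{e_i}\bra{e_l}$, the mobility acts diagonally:
\[
\mathcal K_{\rho_\varepsilon}(E_{il})=\tfrac12(\lambda_i+\lambda_l)E_{il}.
\]
So $\mathcal K_{\rho_\varepsilon}$ is invertible with
\[
\mathcal K_{\rho_\varepsilon}^{-1}(Y)_{il}=\frac{2}{\lambda_i+\lambda_l}Y_{il}.
\]
Writing $Y=J_j$ in this basis, I expect
\[
\langle Y,\mathcal K_{\rho_\varepsilon}^{-1}(Y)\rangle_{\mathrm{HS}}=\sum_{i,l}\frac{2}{\lambda_i+\lambda_l}|Y_{il}|^2 .
\]
This sum is real and nonnegative. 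It also vanishes only when $Y=0$, which is slightly stronger than what is claimed. Taking the real part changes nothing here.

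The main point needing care is that $\mathcal K_{\rho_\varepsilon}$ is self-adjoint and positive for the Hilbert--Schmidt inner product on all of $\Aalg_M$, including non-Hermitian fluxes $J_j$. The diagonal form in the orthonormal basis $\{E_{il}\}$ settles this directly. It shows that $\mathcal K_{\rho_\varepsilon}$ is a positive multiplication operator on $\Aalg_M\cong\mathbb C^{d\times d}$, so its inverse is positive as well. I will also note that strict positivity of $\rho_\varepsilon$, guaranteed by $0<\varepsilon<1$, is exactly what makes the inverse well defined. No earlier proposition is needed beyond these definitions.
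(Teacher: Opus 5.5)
Your proposal is correct and follows the same route as the paper: each term $w_j\operatorname{Re}\langle J_j,\mathcal K_{\rho_\varepsilon}^{-1}(J_j)\rangle_{\mathrm{HS}}$ is nonnegative because the inverse mobility is positive and $w_j>0$, and the cost $\mathfrak g$ is an infimum of nonnegative numbers, equal to $+\infty$ when the constraint set is empty. The only difference is that the paper simply asserts positivity of $\mathcal K_{\rho_\varepsilon}^{-1}$, whereas you prove it on all of $\Aalg_M$, non-Hermitian fluxes included, by showing that $\mathcal K_{\rho_\varepsilon}$ acts on the matrix units $\ket{e_i}\bra{e_l}$ with eigenvalues $\tfrac12(\lambda_i+\lambda_l)\geq\varepsilon/d>0$.
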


\begin{proof}
For each \(j\), positivity of \(\mathcal K_{\rho_\varepsilon}^{-1}\) gives
\[
\operatorname{Re}
\left\langle
J_j,
\mathcal K_{\rho_\varepsilon}^{-1}(J_j)
\right\rangle_{\mathrm{HS}}
\geq 0.
\]
Since \(w_j>0\), each term in the sum defining
\[
\mathcal C_{\rho_\varepsilon,w}(J)
=
\sum_{j=1}^r
w_j
\operatorname{Re}
\left\langle
J_j,
\mathcal K_{\rho_\varepsilon}^{-1}(J_j)
\right\rangle_{\mathrm{HS}}
\]
is non-negative. The local cost is the infimum of non-negative quantities over the affine constraint
\[
X+\operatorname{div}_{\Gset}(J)=0.
\]
This proves the result.
\end{proof}

\subsection{Characterization of zero raw cost}

The raw local cost can vanish only when the tangent vector is represented by a zero-cost flux. With strictly positive weights and strictly positive mobility, the only zero-cost flux is the zero flux.

\begin{proposition}[Zero raw cost]
If
\[
\mathfrak g_{\rho,\varepsilon}^{\Gset,w}(X,X)=0
\]
and if the infimum is attained, then \(X=0\).
\end{proposition}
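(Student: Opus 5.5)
Let $J^\star=(J_1^\star,\ldots,J_r^\star)$ be an admissible flux attaining the infimum. The plan is to show that $J^\star$ must be the zero flux; the continuity equation then gives $X=0$ at once. The ingredients are strict positivity of the weights and strict positive definiteness of $\mathcal K_{\rho_\varepsilon}^{-1}$ as an operator on $\Aalg_M$ equipped with the Hilbert--Schmidt inner product.

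\textbf{Step 1: $\mathcal K_{\rho_\varepsilon}$ is positive definite.} Diagonalize $\rho_\varepsilon=\sum_a\lambda_a\ket{a}\bra{a}$. Because $\rho_\varepsilon>0$, every eigenvalue satisfies $\lambda_a\geq\varepsilon/d>0$. Write $Y_{ab}=\bra{a}Y\ket{b}$. Then
\[
\mathcal K_{\rho_\varepsilon}(Y)_{ab}=\tfrac12(\lambda_a+\lambda_b)Y_{ab}.
\]
So $\mathcal K_{\rho_\varepsilon}$ is a Schur multiplier with strictly positive symbol. It is therefore self-adjoint with respect to $\langle\cdot,\cdot\rangle_{\mathrm{HS}}$, and its inverse multiplies entrywise by $2/(\lambda_a+\lambda_b)$. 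This gives
\[
\langle J,\mathcal K_{\rho_\varepsilon}^{-1}(J)\rangle_{\mathrm{HS}}
=\sum_{a,b}\frac{2}{\lambda_a+\lambda_b}|J_{ab}|^2
\geq\frac{1}{\lambda_{\max}}\|J\|_{\mathrm{HS}}^2,
\]
and the left-hand side is already real. This makes quantitative the positivity used in the Positivity proposition: the form vanishes only when $J=0$.

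\textbf{Step 2: the optimal flux is zero.} By hypothesis $\mathcal C_{\rho_\varepsilon,w}(J^\star)=0$. The cost is a sum of nonnegative terms, so each term vanishes:
\[
w_j\operatorname{Re}\langle J_j^\star,\mathcal K_{\rho_\varepsilon}^{-1}(J_j^\star)\rangle_{\mathrm{HS}}=0
\quad\text{for every } j.
\]
Since $w_j>0$, Step 1 forces $J_j^\star=0$ for every $j$.

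\textbf{Step 3: conclude.} Admissibility gives $X=-\operatorname{div}_{\Gset}(J^\star)=-\operatorname{div}_{\Gset}(0)=0$, because the divergence is linear.

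The only step needing care is Step 1, which must give strict definiteness rather than mere nonnegativity. The eigenbasis computation handles it directly.

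The attainment hypothesis can in fact be removed. The constraint set is an affine subspace and, by Step 1, the cost is a coercive positive-definite quadratic form, so a minimizer always exists whenever $X$ lies in the divergence image. Alternatively, the uniform bound $\mathcal C_{\rho_\varepsilon,w}(J)\geq(\min_j w_j/\lambda_{\max})\sum_j\|J_j\|_{\mathrm{HS}}^2$ shows that any minimizing sequence with cost tending to $0$ converges to $0$. Continuity of $\operatorname{div}_{\Gset}$ then again yields $X=0$.
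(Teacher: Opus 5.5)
Your proof is correct and follows the paper's argument: a flux attaining the infimum has zero cost, so strict positivity of the weights and of $\mathcal K_{\rho_\varepsilon}^{-1}$ forces every $J_j$ to vanish, and the continuity equation then gives $X=0$. Your Step~1 makes explicit the strict definiteness of $\mathcal K_{\rho_\varepsilon}^{-1}$, which the paper's phrase ``positivity of each term'' tacitly relies on; your coercivity remark also correctly shows that the attainment hypothesis is superfluous, consistent with the Remark that follows in the paper.
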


\begin{proof}
If the infimum is attained by a flux \(J\), then
\[
\mathcal C_{\rho_\varepsilon,w}(J)=0.
\]
By positivity of each term and by strict positivity of the weights, this implies
\[
J_j=0
\qquad
\text{for all }j.
\]
The continuity constraint then gives
\[
X+\operatorname{div}_{\Gset}(J)=X=0.
\]
\end{proof}

\begin{remark}
In finite dimension, the infimum is attained whenever the constraint set is nonempty, after restricting the flux variables to the finite-dimensional range relevant to the divergence map. In numerical implementations this is handled by the Moore--Penrose inverse of the corresponding quadratic problem.
\end{remark}

\subsection{Monotonicity under enlargement of the direction set}

The local cost depends on the admissible directions. Adding new directions can only enlarge the set of feasible fluxes, provided the old directions and weights are kept unchanged.

Let
\[
\Gset=\{V_1,\ldots,V_r\}
\]
and let
\[
\Gset'=\{V_1,\ldots,V_r,V_{r+1},\ldots,V_{r+s}\}
\]
be an enlarged family. Assume that the weights of the first \(r\) directions are the same in both models.

\begin{proposition}[Monotonicity under additional directions]
For every tangent vector \(X\),
\[
\mathfrak g_{\rho,\varepsilon}^{\Gset',w'}(X,X)
\leq
\mathfrak g_{\rho,\varepsilon}^{\Gset,w}(X,X),
\]
where \(w'_j=w_j\) for \(1\leq j\leq r\).
\end{proposition}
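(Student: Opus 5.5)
The plan is to embed feasible fluxes of the smaller problem into the larger one by zero-padding, and then compare the two infima over nested feasible sets.

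First, dispose of the trivial case. If $X$ is not in the image of $\operatorname{div}_{\Gset}$, the right-hand side is $+\infty$ by convention, and the inequality holds automatically.

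Now suppose $X$ lies in the image. Take any feasible flux $J=(J_1,\ldots,J_r)$ for $\Gset$, so that $X+\operatorname{div}_{\Gset}(J)=0$, and extend it to $J'=(J_1,\ldots,J_r,0,\ldots,0)$.

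Since $\operatorname{div}_{\Gset'}$ is a sum of one term per direction, and each term $J_jV_j^\dagger-V_j^\dagger J_j$ is linear in $J_j$, the padded entries contribute nothing:
\[
\operatorname{div}_{\Gset'}(J')=\operatorname{div}_{\Gset}(J).
\]
So $J'$ satisfies the continuity equation for $\Gset'$.

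The cost is also a sum of one term per direction. The first $r$ terms use the same weights $w'_j=w_j$ and the same mobility $\mathcal K_{\rho_\varepsilon}$, since $\rho$ and $\varepsilon$ are fixed. The remaining terms vanish because $\mathcal K_{\rho_\varepsilon}^{-1}(0)=0$. Hence
\[
\mathcal C_{\rho_\varepsilon,w'}(J')=\mathcal C_{\rho_\varepsilon,w}(J).
\]

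The zero-padding map therefore sends the feasible set of the $\Gset$ problem injectively into the feasible set of the $\Gset'$ problem and preserves cost. It follows that
\[
\mathfrak g_{\rho,\varepsilon}^{\Gset',w'}(X,X)\le \mathcal C_{\rho_\varepsilon,w}(J)
\]
for every feasible $J$. Taking the infimum over $J$ gives the claim.

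This argument does not need attainment of the infimum, the Remark on Moore--Penrose inverses, or any sign information on the new weights $w'_{r+1},\ldots,w'_{r+s}$ beyond their being finite; positivity only matters elsewhere. There is no real obstacle here. The only point that needs care is that both the divergence and the cost are additive over directions with no cross terms, which holds directly from the definitions in Sec.~\ref{sec:local-cost}.

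The same argument, applied after the infimum over $H$, gives the analogous monotonicity for the transverse cost: for each fixed $H$, the inequality holds for the tangent $X+\ii[H,\rho]$, and the infimum over $H$ preserves it.
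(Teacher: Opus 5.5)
Your proof is correct and is essentially the paper's argument: zero-padding embeds the feasible set for $\Gset$ into that for $\Gset'$ with the same divergence and the same cost, and taking the infimum gives the inequality. Your explicit handling of the $+\infty$ case and the extension to the transverse cost, where a pointwise inequality in $H$ is preserved under $\inf_H$, are both valid additions.
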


\begin{proof}
Any flux
\[
J=(J_1,\ldots,J_r)
\]
for the direction family \(\Gset\) can be embedded into a flux for \(\Gset'\) by setting
\[
J'=(J_1,\ldots,J_r,0,\ldots,0).
\]
The divergence of \(J'\) with respect to \(\Gset'\) is the same as the divergence of \(J\) with respect to \(\Gset\), and the cost is unchanged because the added flux components vanish. Therefore the feasible set for \(\Gset'\) contains an isometric copy of the feasible set for \(\Gset\). Taking the infimum gives the result.
\end{proof}

\begin{remark}
This monotonicity concerns the raw local cost. In practice, adding directions may also change the modeling interpretation of the environment. The inequality simply says that, when all previous directions and weights are retained, additional admissible flux channels cannot increase the minimal cost.
\end{remark}

\subsection{Vanishing of the transverse cost on unitary directions}

The transverse cost is defined by
\[
\mathfrak g_{\rho,\varepsilon}^{\perp,\Gset,w}(X,X)
=
\inf_{H=H^\dagger}
\mathfrak g_{\rho,\varepsilon}^{\Gset,w}
\left(
X+\ii[H,\rho],
X+\ii[H,\rho]
\right).
\]

\begin{proposition}[Unitary directions have zero transverse cost]
If
\[
X=-\ii[H_0,\rho]
\]
for some Hermitian operator \(H_0\), then
\[
\mathfrak g_{\rho,\varepsilon}^{\perp,\Gset,w}(X,X)=0.
\]
\end{proposition}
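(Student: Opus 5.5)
The plan is to evaluate the infimum defining the transverse cost at one explicit Hamiltonian. Since every admissible $H$ gives a nonnegative value, one choice that yields zero settles the claim. This is the same argument as the earlier Proposition on vanishing on unitary directions in Sec.~\ref{sec:transverse-quotient}; we write it in the appendix notation.

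First, we record the lower bound. By the Positivity proposition, $\mathfrak g_{\rho,\varepsilon}^{\Gset,w}(Y,Y)\geq 0$ for every tangent $Y$, possibly with value $+\infty$. Hence the transverse cost, being an infimum of such values over $H=H^\dagger$, is also $\geq 0$.

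Next, we take the admissible choice $H=H_0$, which is Hermitian by assumption. The shifted tangent is then
\[
X+\ii[H_0,\rho]=-\ii[H_0,\rho]+\ii[H_0,\rho]=0 .
\]
We evaluate the raw cost of the zero tangent. The zero flux $J=(0,\ldots,0)$ satisfies the continuity constraint $0+\operatorname{div}_{\Gset}(0)=0$, because the divergence is linear in $J$. Its cost $\mathcal C_{\rho_\varepsilon,w}(0)$ vanishes term by term. So the constraint set is nonempty, and $\mathfrak g_{\rho,\varepsilon}^{\Gset,w}(0,0)\leq 0$. Together with positivity this gives $\mathfrak g_{\rho,\varepsilon}^{\Gset,w}(0,0)=0$.

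Finally, the transverse cost is an infimum bounded above by its value at $H=H_0$, which is $0$, and bounded below by $0$. Therefore it equals $0$. There is no real obstacle here. The only point to check is that the zero tangent lies in the feasible sector, so the value $+\infty$ never arises; this follows at once because the zero flux is always feasible, independently of whether $\Gset_1\subseteq\Gset$.
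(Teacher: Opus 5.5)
Your proof is correct and follows the paper's argument: take $H=H_0$, so the shifted tangent is zero, and the zero tangent is represented by the zero flux at zero cost. Invoking the Positivity proposition for the lower bound, and noting that the zero flux is always feasible, only spells out what the paper's one-line proof leaves implicit.
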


\begin{proof}
Choosing \(H=H_0\) gives
\[
X+\ii[H,\rho]
=
-\ii[H_0,\rho]+\ii[H_0,\rho]
=
0.
\]
The raw local cost of the zero tangent vector is zero, because it is represented by the zero flux. Hence the infimum defining the transverse cost is zero.
\end{proof}

\subsection{Quadratic quotient invariance}

The previous proposition implies that the transverse cost is not positive definite on the full tangent space. It is positive only after quotienting by unitary directions.

Let
\[
T_\rho^{\mathrm{unit}}
=
\{-\ii[H,\rho]:H=H^\dagger\}.
\]
If two tangent vectors differ by an element of \(T_\rho^{\mathrm{unit}}\), then they have the same transverse cost.

\begin{proposition}[Invariance under unitary tangent shifts]
Let
\[
Y=-\ii[H_0,\rho]\in T_\rho^{\mathrm{unit}}.
\]
Then
\[
\mathfrak g_{\rho,\varepsilon}^{\perp,\Gset,w}(X+Y,X+Y)
=
\mathfrak g_{\rho,\varepsilon}^{\perp,\Gset,w}(X,X).
\]
\end{proposition}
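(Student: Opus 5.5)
The plan is a change of variables in the infimum that defines the transverse cost. The whole argument uses only the fact that the set of Hermitian operators is a real vector space, so translating it by the fixed Hermitian $H_0$ is a bijection onto itself.

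First I will write out the left-hand side explicitly:
\[
\mathfrak g_{\rho,\varepsilon}^{\perp,\Gset,w}(X+Y,X+Y)
=\inf_{H=H^\dagger}\mathfrak g_{\rho,\varepsilon}^{\Gset,w}\bigl(X-\ii[H_0,\rho]+\ii[H,\rho],\,X-\ii[H_0,\rho]+\ii[H,\rho]\bigr).
\]
By linearity of the commutator, the argument equals $X+\ii[H-H_0,\rho]$. Next I set $H'=H-H_0$. Since $H_0$ is Hermitian, $H'$ is Hermitian exactly when $H$ is, and the map $H\mapsto H-H_0$ is a bijection of $\{H=H^\dagger\}$ with inverse $H'\mapsto H'+H_0$. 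The infimum over $H$ therefore equals the infimum over $H'$ of $\mathfrak g_{\rho,\varepsilon}^{\Gset,w}(X+\ii[H',\rho],X+\ii[H',\rho])$, which is by definition $\mathfrak g_{\rho,\varepsilon}^{\perp,\Gset,w}(X,X)$.

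The only point needing care is that the raw cost may take the value $+\infty$ off the divergence image. The reindexing identifies the two families of values term by term, so the two infima coincide as elements of $[0,+\infty]$ and no finiteness assumption is needed. Positivity is also not required here.

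I do not expect a real obstacle. As a remark, I would note that the identity shows the transverse cost is constant on cosets of $T_\rho^{\mathrm{unit}}$, so it descends to the quotient $T_\rho\mathcal D_M/T_\rho^{\mathrm{unit}}$. Taking $X=0$ recovers the earlier proposition on vanishing along unitary directions.
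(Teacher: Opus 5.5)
Your proof is correct and is essentially the paper's own argument: the paper likewise reindexes the infimum by $\widetilde H=H-H_0$, which ranges over all Hermitian operators as $H$ does. Your observation that the reindexing matches values term by term in $[0,+\infty]$, so no finiteness is needed, is a small refinement that the paper leaves implicit.
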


\begin{proof}
By definition,
\begin{align*}
\mathfrak g_{\rho,\varepsilon}^{\perp,\Gset,w}(X+Y,X+Y)
&=
\inf_{H=H^\dagger}
\mathfrak g_{\rho,\varepsilon}^{\Gset,w}
\bigl(
X-\ii[H_0,\rho]+\ii[H,\rho], \\
&\hspace{4.3cm}
X-\ii[H_0,\rho]+\ii[H,\rho]
\bigr).
\end{align*}
Set
\[
\widetilde H=H-H_0.
\]
As \(H\) runs over all Hermitian operators, so does \(\widetilde H\). Hence the above infimum is equal to
\[
\inf_{\widetilde H=\widetilde H^\dagger}
\mathfrak g_{\rho,\varepsilon}^{\Gset,w}
\left(
X+\ii[\widetilde H,\rho],
X+\ii[\widetilde H,\rho]
\right),
\]
which is precisely
\[
\mathfrak g_{\rho,\varepsilon}^{\perp,\Gset,w}(X,X).
\]
\end{proof}

Thus the transverse cost is naturally defined on the quotient space
\[
T_\rho\mathcal D_M/T_\rho^{\mathrm{unit}}.
\]
This is the local mathematical expression of the physical rule used in the main text: reversible Hamiltonian variations are not counted as dissipative degradation.

\section{Numerical implementation}
\label{app:numerical-implementation}

This appendix describes the finite-dimensional numerical implementation of the local and transverse costs. All computations are performed after choosing a Fock cutoff \(M\), so that the Hilbert space dimension is
\[
d=M+1.
\]
Matrices in \(\Aalg_M\) are represented as \(d\times d\) complex matrices and then vectorized.

\subsection{Vectorization convention}

We use the column-stacking vectorization
\[
\operatorname{vec}(A)_{m+nd}=A_{mn},
\]
so that
\[
\operatorname{vec}(AXB)
=
(B^{\mathsf T}\otimes A)\operatorname{vec}(X).
\]
With this convention, left and right multiplication by a matrix are represented by
\[
L_A:\operatorname{vec}(X)\mapsto \operatorname{vec}(AX)
=
(\Id\otimes A)\operatorname{vec}(X),
\]
and
\[
R_A:\operatorname{vec}(X)\mapsto \operatorname{vec}(XA)
=
(A^{\mathsf T}\otimes \Id)\operatorname{vec}(X).
\]

\subsection{Matrix representation of the mobility}

The symmetric mobility is
\[
\mathcal K_{\rho_\varepsilon}(Y)
=
\frac12(\rho_\varepsilon Y+Y\rho_\varepsilon).
\]
In vectorized form, it is represented by the positive matrix
\[
K_{\rho_\varepsilon}
=
\frac12
\left(
\Id\otimes \rho_\varepsilon
+
\rho_\varepsilon^{\mathsf T}\otimes \Id
\right).
\]
Because
\[
\rho_\varepsilon>0,
\]
the matrix \(K_{\rho_\varepsilon}\) is invertible.

For a flux component \(J_j\), the quadratic contribution to the cost is
\[
w_j
\operatorname{Re}
\left\langle
J_j,\mathcal K_{\rho_\varepsilon}^{-1}(J_j)
\right\rangle_{\mathrm{HS}}.
\]
In vectorized form, this becomes
\[
w_j
\,\operatorname{Re}
\left[
\operatorname{vec}(J_j)^\dagger
K_{\rho_\varepsilon}^{-1}
\operatorname{vec}(J_j)
\right].
\]

\subsection{Matrix representation of the divergence}

For a direction \(V_j\), the divergence contribution is
\[
J_jV_j^\dagger
-
V_j^\dagger J_j.
\]
Using the vectorization convention above,
\[
\operatorname{vec}(J_jV_j^\dagger)
=
((V_j^\dagger)^{\mathsf T}\otimes \Id)\operatorname{vec}(J_j),
\]
and
\[
\operatorname{vec}(V_j^\dagger J_j)
=
(\Id\otimes V_j^\dagger)\operatorname{vec}(J_j).
\]
Thus the contribution of \(J_j\) to the divergence is represented by
\[
D_j
=
(V_j^\ast\otimes \Id)
-
(\Id\otimes V_j^\dagger),
\]
where \(V_j^\ast=(V_j^\dagger)^{\mathsf T}\).

The full divergence matrix is the block matrix
\[
D_{\Gset}
=
\begin{bmatrix}
D_1 & D_2 & \cdots & D_r
\end{bmatrix}.
\]
If
\[
j=
\begin{bmatrix}
\operatorname{vec}(J_1)\\
\vdots\\
\operatorname{vec}(J_r)
\end{bmatrix},
\]
then
\[
\operatorname{vec}(\operatorname{div}_{\Gset}(J))
=
D_{\Gset}j.
\]

The continuity equation
\[
X+\operatorname{div}_{\Gset}(J)=0
\]
becomes the linear constraint
\[
D_{\Gset}j=-x,
\qquad
x=\operatorname{vec}(X).
\]

\subsection{Quadratic minimization for the raw cost}

Let
\[
G_{\rho_\varepsilon,w}
=
\operatorname{diag}
\left(
w_1K_{\rho_\varepsilon}^{-1},
\ldots,
w_rK_{\rho_\varepsilon}^{-1}
\right).
\]
The raw cost is the constrained quadratic problem
\[
\mathfrak g_{\rho,\varepsilon}^{\Gset,w}(X,X)
=
\inf_{j:\,D_{\Gset}j=-x}
j^\dagger
G_{\rho_\varepsilon,w}
j.
\]

Equivalently, after introducing
\[
B_{\rho_\varepsilon,w}
=
D_{\Gset}
G_{\rho_\varepsilon,w}^{-1}
D_{\Gset}^\dagger,
\]
one obtains the dual expression
\[
\mathfrak g_{\rho,\varepsilon}^{\Gset,w}(X,X)
=
x^\dagger
B_{\rho_\varepsilon,w}^{+}
x,
\]
provided \(x\) belongs to the range of \(D_{\Gset}\). Here \(B^+\) denotes the Moore--Penrose pseudoinverse. In numerical computations, small singular values are removed according to a fixed tolerance.

This formula is convenient because it reduces the flux minimization to a pseudoinverse on the tangent-vector space.

\subsection{Hermiticity and real representation}

The physical tangent vectors are Hermitian and traceless. The flux variables may be represented either as complex matrices with the final cost restricted to real tangent directions, or by separating real and imaginary parts. In the computations reported in the paper, one may use the real representation
\[
\operatorname{vec}_{\mathbb R}(A)
=
\begin{bmatrix}
\operatorname{Re}\operatorname{vec}(A)\\
\operatorname{Im}\operatorname{vec}(A)
\end{bmatrix}.
\]
The complex linear maps are then replaced by their real block representations. This ensures that the resulting quadratic forms are real symmetric matrices.

The trace constraint
\[
\Tr(X)=0
\]
is imposed by projecting out the identity direction. Equivalently, the computations can be performed in the full matrix space and then restricted to traceless tangent vectors.

\subsection{Numerical transverse quotient}

The transverse cost is
\[
\mathfrak g_{\rho,\varepsilon}^{\perp,\Gset,w}(X,X)
=
\inf_{H=H^\dagger}
\mathfrak g_{\rho,\varepsilon}^{\Gset,w}
\left(
X+\ii[H,\rho],
X+\ii[H,\rho]
\right).
\]

Let
\[
A_\rho(H)=\ii[H,\rho].
\]
After choosing a real basis \(\{F_\alpha\}_\alpha\) of Hermitian matrices, write
\[
H=\sum_\alpha h_\alpha F_\alpha.
\]
The map
\[
h\mapsto \operatorname{vec}_{\mathbb R}(A_\rho(H))
\]
is represented by a real matrix \(U_\rho\).

Let \(B^+\) denote the raw-cost matrix acting on tangent vectors. The transverse cost is then the finite-dimensional least-squares problem
\[
\inf_h
(x+U_\rho h)^{\mathsf T}
B^+
(x+U_\rho h).
\]
When the relevant matrices are restricted to the finite-dimensional range of the raw cost, the minimizer satisfies the normal equation
\[
U_\rho^{\mathsf T}B^+U_\rho h
=
-
U_\rho^{\mathsf T}B^+x.
\]
Using the Moore--Penrose pseudoinverse if necessary, one obtains
\[
h_\ast
=
-
\left(
U_\rho^{\mathsf T}B^+U_\rho
\right)^+
U_\rho^{\mathsf T}B^+x.
\]
The transverse cost is then
\[
(x+U_\rho h_\ast)^{\mathsf T}
B^+
(x+U_\rho h_\ast).
\]

This implementation makes explicit that the transverse quotient is a projection with respect to the raw local cost, not with respect to the Hilbert--Schmidt norm.

\subsection{Regularization and tolerances}

The parameter \(\varepsilon\) controls the conditioning of the mobility. For nearly pure states, too small a value of \(\varepsilon\) can produce large condition numbers. The numerical experiments should therefore report the value of \(\varepsilon\) and check that qualitative conclusions are stable over a reasonable range.

Similarly, all pseudoinverses depend on a singular-value cutoff. The same tolerance should be used consistently across comparable experiments. A typical implementation keeps singular values larger than
\[
\tau\,s_{\max},
\]
where \(s_{\max}\) is the largest singular value and \(\tau\) is a prescribed relative tolerance.

The numerical values reported in the main text should therefore be interpreted as regularized local costs. The robust information is contained in the hierarchy of costs and in their dependence on \(\Gset\), \(w\), and \(\rho\), rather than in any single absolute number.

\section{Numerical scope and stability}
\label{app:robustness}

The computations use the column-vectorization and pseudoinverse formulas described above. Every reported tangent is Hermitian and trace zero. For each calculation we record the relative image residual
\[
r_{\mathrm{im}}(X)=
\frac{\|(I-P_{\operatorname{Im}D})\operatorname{vec}(X)\|_2}
{\|\operatorname{vec}(X)\|_2},
\]
and, after the Hamiltonian minimization, the relative residual of the normal equations. The data files accompanying Figures~2--4 contain these quantities for every plotted point.

\begin{table}[t]
\centering
\caption{Selected numerical controls. Costs refer to the raw tangents used in the figures.}
\label{tab:numerical-controls}
\small
\begin{tabular}{p{0.31\linewidth}p{0.34\linewidth}p{0.24\linewidth}}
\toprule
Control & Range tested & Result \\
\midrule
$\operatorname{rank}D_{\Gset_1}$ & $M=4,6,8,10,12$ & exactly $(M+1)^2-1$ \\
Image residual, cutoff tests & $M=8,10,12$, $\varepsilon=10^{-5}$ & at most $3.2\times10^{-11}$ \\
$g_\perp(X_{\mathrm{loss},2};\Gset_3)$ & $M=8,10,12$ & $4.688,\ 4.710,\ 4.701$ \\
$g_\perp(X_{\mathrm{deph}};\Gset_3^{\mathrm{diag}})$ & $M=8,10,12$ & $0.674,\ 0.676,\ 0.683$ \\
Frame covariance error & complex rescalings of the six $\Gset_3$ elements & $2.1\times10^{-12}$ \\
Quadratic homogeneity error & $\lambda=2.5$ & $1.1\times10^{-12}$ \\
\bottomrule
\end{tabular}
\end{table}

The regularization test at $M=10$ uses $\varepsilon=10^{-4},10^{-5},10^{-6}$. The directly adapted $\Gset_3$ cost of $X_{\mathrm{loss},2}$ remains $4.7088$, $4.7098$, and $4.7099$, while the $\Gset_3^{\mathrm{diag}}$ dephasing cost remains $0.626$, $0.676$, and $0.681$. By contrast, the indirect $\Gset_1$ representation of $X_{\mathrm{loss},2}$ scales from $3.74\times10^5$ to $3.74\times10^7$ as $\varepsilon$ decreases. This strong dependence is expected: the inverse arithmetic mobility penalizes components supported on small eigenvalues of the nearly pure reference state. The robust conclusion is the efficiency hierarchy, not a universal absolute number.

Changing the cutoff from $M=8$ to $M=12$ leaves the directly adapted costs nearly unchanged and preserves the many-orders-of-magnitude contrast between $\Gset_1$ and $\Gset_3$. Boundary populations of the chosen compass state are small over this range. The algebraic image rank is exactly the trace-zero dimension in every tested truncation, numerically confirming the proposition for $\Gset_1$.

The pseudoinverse tolerance is $10^{-10}$ relative to the largest eigenvalue. At the main parameters, image and normal-equation residuals are well below the scale of the plotted effects. All raw values, residuals, tangent Hilbert--Schmidt norms, cutoffs, regularizations and frame conventions are provided in comma-separated data files and regenerated by the supplied scripts.

\printbibliography

\end{document}